\theoremstyle{plain}
\newtheorem{lemma}{Lemma}[section]
\newtheorem{theorem}[lemma]{Theorem}
\theoremstyle{definition}
\newtheorem{definition}[lemma]{Definition}
\theoremstyle{remark}
\newtheorem*{ulemma}{Lemma}
\title{Understanding the Relationship Between Core Constraints and Core-Selecting Payment Rules in Combinatorial Auctions}
\author[1]{Robin Fritsch}
\author[2]{Younjoo Lee}
\author[3]{Adrian Meier}
\author[4]{Ye Wang}
\author[5]{Roger Wattenhofer}
\affil[1]{rfritsch@ethz.ch\\ETH Zürich}
\affil[2]{youlee@student.ethz.ch\\Seoul National University}
\affil[3]{meiera@student.ethz.ch\\ETH Zürich}
\affil[4]{wangye@ethz.ch\\University of Macau}
\affil[5]{wattenhofer@ethz.ch\\ETH Zürich}
\begin{document}

\maketitle

\begin{abstract}
Combinatorial auctions (CAs) allow bidders to express complex preferences for bundles of goods being auctioned.
However, the behavior of bidders under different payment rules is often unclear.
In this paper, we aim to understand how core constraints interact with different core-selecting payment rules.
In particular, we examine the natural and desirable non-decreasing property of payment rules, which states that bidders cannot decrease their payments by increasing their bids.
Previous work showed that, in general, the widely used VCG-nearest payment rule violates the non-decreasing property in single-minded CAs.
We prove that under a single effective core constraint, the VCG-nearest payment rule is non-decreasing.
In order to determine in which auctions single effective core constraints occur, we introduce a conflict graph representation of single-minded CAs and find sufficient conditions for the single effective core constraint in CAs.
Finally, we study the consequences on the behavior of the bidders and show that no over-bidding exists in any Nash equilibrium for non-decreasing core-selecting payment rules.
\end{abstract}

\section{Introduction}

Combinatorial Auctions (CAs) are widely used to sell multiple goods with unknown value at competitive market prices~\cite{rassenti1982combinatorial}.
CAs permit bidders to fully express their preferences by allowing them to bid on item \textit{bundles} instead of being limited to bidding on individual items.
A CA consists of an allocation algorithm that chooses the winning bidders, and a payment function that determines the winner's payments.
CAs are popular, sometimes with a total turnover in the billions of US dollars~\cite{ausubel2017practical}.
Often auction designers want an auction to be truthful, in the sense that all bidders are incentivized to reveal their true value.


\begin{table}[h]
    \centering
\begin{tabular}{ | >{\centering}m{3cm} | >{\centering}m{1.25cm}| >{\centering}m{1.25cm} | >{\centering\arraybackslash}m{1.2cm} | }
 \hline
  & Local Bidder 1 & Local Bidder 2 & Global Bidder \\
 \hline
 Bundle & $\{A\}$ & $\{B\}$ & $\{A,B\}$ \\
 \hline
 Bid & $6$ & $7$ & $9$ \\
 \hline
 Allocation & $\{A\}$ & $\{B\}$ & $\{\}$ \\
 \hline
 First-price payment & 6 & 7 & 0 \\
 \hline
 VCG payment & $2$ & $3$ & $0$ \\
 \hline
 VN payment & $4$ & $5$ & $0$ \\
 \hline
\end{tabular}
    \caption{An example of an auction with 3 bidders and 2 items. The two local bidders win the auction because they bid $6+7 = 13$, whereas the global bidder only bids $9$.}
    \label{tbl:LLG}
\end{table}


Consider the example in Table \ref{tbl:LLG}. This is a so-called Local-Local-Global (LLG) auction; two bidders are local in the sense that they are only interested in one good each, while the global bidder wants to buy all goods.
If the payment scheme is the first-price payment, the winning local bidders would need to pay $6+7 = 13$. They would have been better off by lying, for instance, by bidding a total amount of $10$ only.

The well-known Vickrey-Clarke-Groves (VCG) payment scheme~\cite{vickrey1961counterspeculation,clarke1971multipart,groves1973incentives} is the unique payment function to guarantee being truthful under the optimal welfare allocation. In our example, the VCG payments are much lower. Indeed, VCG payments are often not plausible in practice because of too low payments~\cite{ausubel2006lovely}.
In our example, the VCG payments of $2+3 = 5$ are less valuable than the bid of the global bidder. Therefore, the global bidder and the seller should ignore the VCG mechanism and make a direct deal.

Core-selecting payment rules, in particular the VCG-nearest (VN) payment~\cite{day2012quadratic}, have been introduced to improve the situation and to guarantee the seller a reasonable revenue~\cite{day2007fair}. 
The VN payment rule selects the closest point to the VCG payments in the \textit{core}, where
the core is the set of payments, for which no coalition is willing to pay more than the winners~\cite{day2008core} (see Figure \ref{fig:core}).

\begin{figure}[ht]
    \centering
    \includegraphics[width=0.8\textwidth]{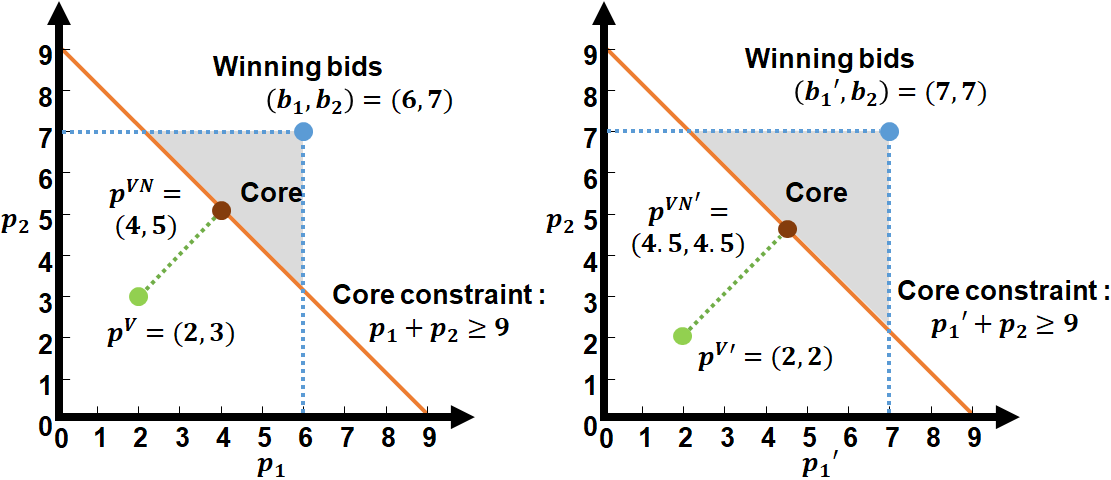}
    \caption{Left: payment space of winning bidders in Table \ref{tbl:LLG}. The green point $p^V$ is the VCG payment point, the red point $P^{VN}$ is the VN payment point, the orange line is the core constraint on payments of local bidders 1 and 2, and the gray triangle is the core given by core constraints. Right: If bidder 1 increases their bid from 6 to 7, their payment increases as well, from 4 to 4.5.}
    \label{fig:core}
\end{figure}

In this paper, we study payment rules for welfare maximizing known single-minded CAs in which each bidder is interested in a single known bundle.
The profile of desired bundles together with the profile of bids define a number of linear constraints (the core constraints) which form a polytope (the core).

However, also core-selecting payments such as the VN payment are not perfect.
It has been shown that bidders can sometimes decrease their payments by announcing \textit{higher-than-truthful} bids under the VN payment rule. Examples which show such overbidding behavior already need a non-trivial amount of goods and bidders~\cite{bosshard2018non}.

In this paper, we study the limitations of VN payments. How complicated can CAs get such that bidders cannot profit from higher-than-truthful bids when using VN payments?
What is the relation between different core constraints and core-selecting payment rules?


In particular, we study when the non-decreasing property holds, which is a natural and desirable property of payment rules.
This property requires that a bidder cannot decrease their payment by increasing their bid.
We examine for which kind of core constraints VN payments are non-decreasing.
More precisely, we show that the non-decreasing property holds whenever a single effective core constraint exists.

Our second result determines which kinds of auctions are non-decreasing. To do so, we introduce a graph-based representation of CAs.
We construct a conflict graph based on the overlap between the desired bundles of the bidders.
We find sufficient conditions on the conflict graph to have a single effective core constraint. In particular, we show that this is the case if the conflict graph is a complete multipartite graph or if any maximal independent set in the conflict graph has at most two nodes.
Furthermore, we show that for auctions with at most three winners, the VN payment is non-decreasing, without relying on the existence of a single effective core constraint.

Thirdly, we study the consequences on the incentives of the bidders.
We prove that for a non-decreasing payment rule, over-bidding strategy is always weakly dominated by the truthful bidding in any Nash equilibrium in single-minded CAs. This proves a conjecture made by \cite{bosshard2018non}.

Finally, we also study the non-decreasing property for two other common payment rules: the proxy and the proportional payment. Although these two payment rules are not non-decreasing with multi-minded bidders, we prove the non-decreasing property in the case of single-minded bidders.


\section{Related Work}

The incentives of bidders in CAs with core-selecting payment rules are not  understood well~\cite{goeree2016impossibility}. Day and Milgrom claimed that core-selecting payment rules minimize incentives to misreport~\cite{day2008core}. However, it is not known under which circumstances certain incentive properties, like the non-decreasing property, hold.
The non-decreasing property has been observed for the VN payment rule in LLG auctions~\cite{ausubel2020core}, but does not hold in other single-minded CAs~\cite{bosshard2018non}.
Markakis and Tsikirdis examined two other payment rules, $0$-nearest and $b$-nearest, which select the closest point in the minimum-revenue core to the origin and to the actual bids, respectively~\cite{markakis2019core}.
They prove that these two payment rules satisfy the non-decreasing property in single-minded CAs.

To understand the performance of core-selecting payment rules in CAs, Day and Raghavan propose a constraint generation to codify the pricing problem concisely~\cite{day2007fair}. Later on, B{\"u}nz et al. proposed a faster algorithm to generate core constraints based on the idea of conflict graphs among participants in the auction~\cite{bunz2015faster}.

Payment properties strongly influence incentive behaviors in CAs. Previous research focused on game-theoretic analysis~\cite{day2007fair} and showed that bidders might deviate from their truthful valuation to \textit{under}-bidding strategies (bid shading) or \textit{over}-bidding strategies, where bidders place a bid lower or higher than their valuation, respectively. Ryuji Sano~\cite{sano2011incentives} proved that the truthful strategy is not dominant in proxy and bidder-optimal core-selecting auctions without a triangular condition. However, whether over-bidding strategies exist in any NE is still an open question.

Previous work has shown that in both full and incomplete information setting, under-bidding strategies always exist in Pure Nash equilibria (PNE) and Bayesian Nash equilibria (BNE) for core-selecting CA. Beck and Ott examined over-bidding strategies in a general full-information setting and proved that every minimum-revenue core-selecting CA has a PNE, which only contain over-bids~\cite{ott2013incentives}.
Although the existence of over-bidding strategies in PNE has been proven, incentives for over-bidding when values are private are not very well understood. In BNE, bidders choose from their action space to respond to others' expected strategies with a common belief about the valuation distribution among all bidders.
One of the few known facts is that bidders might over-bid on a losing bundle to decrease their payment for a winning bundle~\cite{ott2013incentives,bosshard2020computing,bosshard2020cost}.


Compared to previous studies our work fills the following three gaps.
First, previous studies have not fully considered how core constraints influence core-selecting payment property. This paper examines how core constraints interact with core-selecting payment rules, which motivates better designs of CA models.
Second, 
since we believe that graph representations are at the heart of understanding the core constraints and core-selecting payment rules, we represent conflicts as a graph.
Finally, the relationship between non-decreasing payment rules and incentive behaviors in CAs has not been studied yet. Our work provides new insight into the existence of over-bidding strategies in Nash equilibria, underlining the importance of the non-decreasing property.

\section{Formal Model}

We study auctions under the assumption that all bidders as well as the auctioneer act independently, rationally, and selfishly. Each bidder aims to maximize personal utility.

\subsection{Combinatorial Auctions}

In a combinatorial auction (CA) a set $M=\{1,\ldots, m\}$ of goods is sold to a set $N=\{1,\ldots,n\}$ of bidders.
In this paper, we consider single-minded CAs (SMCAs) in which every bidder only bids on a single bundle.
Let $k_i\subset M$ be the single bundle that bidder $i$ is bidding for and denote $k = (k_1, \ldots,k_n)$ as the interest profile of the auction. We assume that the interest profile of an auction is known and fixed.
Furthermore, let $v_i\in\mathbb{R}_{\geq 0}$ be the true (private) value of $k_i$ to bidder $i$ and $b_i\in\mathbb{R}_{\geq 0}$ the bid bidder $i$ submits for $k_i$.
The bids of all bidders are summarized in the bid profile $b = (b_1,\ldots,b_n)$.
We denote the bid profile of all bids except bidder $i$'s as $b_{-i}$, and in general, the bid profile of a set $L\subset M$ of bidders as $b_L$.

A CA mechanism $(X,P)$ consists of a winner determination algorithm $X$ and a payment function $P$. The winner determination selects the winning bids while the payment function determines how much each winning bidder must pay.

\subsection{Winner Determination}
The allocation algorithm $X(b)$ returns an efficient allocation $x$, i.e.\ a set of winning bidders who receive their desired bundles.
All other bidders receive nothing.
An allocation is called efficient if it maximizes the reported social welfare which is defined as the sum of all winning bids. We denoted the reported social welfare as $W(b, x)=\sum_{i\in x} b_i$.
This optimization problem is subject to the constraint that every item is contained in at most one winning bundle.

Every bidder intends to maximize their utility which is the difference between their valuation of the bundle they acquire and the payment they make.
The social optimum would be to choose the allocation that maximizes the sum of valuations of the winning bundles. However, since the valuations are private, the auctioneer can only maximize the reported social welfare.

\subsection{Payment Functions}
We assume the payment function satisfies voluntary participation, i.e., no bidder pays more than they bid.
So the payment $p_i$ of bidder $i$ satisfies $p_i\leq b_i$ for every $i\in N$.

The Vickrey-Clarke-Groves (VCG) payment is the unique payment rule which always guarantees truthful behavior of bidders in CAs. We denote bidder $i$'s VCG payment as $p_i^V$.

\begin{definition}[VCG payment]
For an efficient allocation $x=X(b)$, the VCG payment of bidder $i$ is
\begin{equation*}
    p_i^V(b,x) := W(b,X(b_{-i}))-W(b,x_{-i})
\end{equation*}
where $x_{-i} = x \setminus \{i\}$ is the set of all winning bidder except $i$.
Note that $X(b_{-i})$ is an efficient allocation in the auction with all bids except bidder $i$'s bid.
\end{definition}

The VCG payment $p_i^V$ is a measurement of bidder $i$'s contribution to the solution. It represents the difference between the maximum social welfare in an auction without $i$ and the welfare of all winners except $i$ in the original auction.

\begin{definition}[Core-selecting Payment Rule]
\label{def:core}
For an efficient allocation $x=X(b)$, the core is the set of all points $p(b,x)$ which satisfies the following constraint for every subset $L \subseteq N:$
$$\sum_{i \in N \setminus L} p_i(b,x) \geq W(b, X(b_L)) - W(b, x_L) $$
Here, $x_L = x \cap L$ is the set of winning bidders in $L$ under the allocation $x$.
Note that $X(b_L)$ is an efficient allocation in the auction with only the bids of bidders in $L$.

A payment rule is called \emph{core-selecting} if it selects a point within the core.
The \emph{minimum revenue core} forms the set of all points $p(b,x)$ minimizing $\sum_{i\in N}p_i(b,x)$ subject to being in the core.
\end{definition}

The core is described by lower bound constraints on the payments such that no coalition can form a mutually beneficial renegotiation among themselves. Those core constraints impose that any set of winning bidders must pay at least as much as their opponents would be willing to pay to get their items. The VN payment rule selects a payment point in the core closest to the VCG point.

\begin{definition}[VCG-nearest Payment]
The VCG-nearest payment rule (quadratic payment, VN payment) picks the closest point to the VCG payment within the minimum-revenue core with respect to Euclidean distance.
\end{definition}

We also study the proxy payment and the proportional payment, which are both core-selecting.

\begin{definition}[Proxy Payment]
The proxy payment selects the point in the core where the winners of the auction will share the total payment equally. It is defined as the point of the form $p_i(b,x) = min[\alpha, b_i]$ for the minimum $\alpha \geq 0$ such that the point is in the core.
\end{definition}
\begin{definition}[Proportional Payment]
With the proportional payment rule, the winning bidders' payments are given by the point in the core that minimizes the total payment $\sum_{i\in N}p_i(b,x)$ and is of the form $p_i(b,x) := \alpha \cdot b_i$ for some $\alpha\in[0, 1]$.
\end{definition}

\begin{definition}[Non-decreasing Payment Rule]
\label{def:non-decreasing-weak}
For any allocation $x$, let $\mathcal{B}_x$ be the set of bid profiles for which $x$ is efficient. The payment-rule $p$ is non-decreasing if, for any bidder~$i$, any allocation $x$, and bid profiles $b, b' \in \mathcal{B}_x$ with $b'_i \geq b_i$ and $b_{-i}=b'_{-i}$, the following holds:
$$ p_i(b',x) \geq p_i(b,x) $$
\end{definition}

\section{Non-decreasing payment rules and single effective core constraints}

\label{sec:secc_non}

We begin by proving a sufficient condition on the core constraints that guarantees that VN is a non-decreasing payment rule.

For core selecting payment rules, the core constraints bound the payments from below to ensure that no collation has a higher reported price than the winners. However, many of the constraints are redundant since other constraints are more restrictive. For example, consider an LLG auction such as the one shown in Figure \ref{fig:core} in which the local bidders win.
The core constraints on their payment are then
\begin{align}
    p_1 + p_2 &\geq b_G\label{eq:LLG1}\\
    p_1 &\geq b_G-b_2\label{eq:LLG2}\\
    p_2 &\geq b_G-b_1\label{eq:LLG3}
\end{align}
where $b_G$ is the bid of the global bidder.
Of these constraints, \eqref{eq:LLG2} is immediately satisfied, as soon \eqref{eq:LLG1} holds since $p_2\leq b_2$.
The same is true for \eqref{eq:LLG3}.
So \eqref{eq:LLG1} is the only effective constraint. We will formalize this idea in the following.
Note that the constraints \eqref{eq:LLG2} and \eqref{eq:LLG3} discussed above are of the form $p_i\geq p_i^V$.
Such a constraint arises for every winning bidder from the core constraint for $N\setminus L=\{i\}$.
However, we can in general disregard core constraints of the form $p_i\geq p_i^V$ which we call VCG-constraints since we are minimizing the distance between $p$ and $p^V$ and no other constraint forces $p_i< p_i^V$.

\begin{definition}
\label{def:single_eff_core_constr}
Consider an SMCA with a fixed interest profile and a fixed winner allocation.
Intuitively, we say a \textbf{single effective core constraint} (SECC) exists, if the fact that a single core constraint holds implies that all other core constraints are satisfied for all bid profiles. 
More formally, an SECC exists, if the polytope defined by this core constraint together with the voluntary participation constraints exactly equals the core (which is defined by all core constraints).
\end{definition}

\begin{theorem}
\label{thm:single_core_constraint}
The VN payment rule is non-decreasing for SMCAs with a single effective core constraint.
\end{theorem}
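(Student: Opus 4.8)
The plan is to reduce the statement to a single Euclidean projection and then show that the coordinate of interest moves monotonically as the data of that projection changes. First I would fix the winning set $x$ and use the SECC hypothesis: since the interest profile and allocation are fixed, the effective constraint has the form $\sum_{j\in S} p_j \geq C$ for a fixed coalition $S = x\setminus L \subseteq x$ and a scalar $C = W(b,X(b_L)) - W(b,x_L)\ge 0$, and the whole core equals this constraint together with the box constraints $0\le p_j\le b_j$. Minimizing total revenue then forces $p_j=0$ for every winner $j\notin S$ and $\sum_{j\in S}p_j = C$, so the minimum-revenue core is the polytope $\{p : \sum_{j\in S}p_j = C,\ 0\le p_j\le b_j,\ p_j=0\ (j\notin S)\}$, and the VN point is the unique Euclidean projection of $p^V$ onto it. If $i\notin S$, then $p_i^{VN}=0$ for every bid profile and the claim is trivial, so the work is in the case $i\in S$.

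The key facts about the dependence on $b_i$ that I would establish are: (i) a bidder's own VCG payment $p_i^V = W(b,X(b_{-i})) - W(b,x_{-i})$ is independent of $b_i$, since neither term contains $b_i$; (ii) $C$ is independent of $b_i$, because $i\in S$ means $i\notin L$, so neither $W(b,X(b_L))$ nor $W(b,x_L)$ involves $b_i$; and (iii) for every other coordinate $l\neq i$ the VCG payment $p_l^V = W(b,X(b_{-l})) - \sum_{j\in x,\ j\ne l} b_j$ is non-increasing in $b_i$, since raising $b_i$ by $\Delta$ increases the subtracted sum by exactly $\Delta$ while increasing $W(b,X(b_{-l}))$ by at most $\Delta$ (its slope in $b_i$ is $0$ or $1$). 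Hence, as $b_i$ rises from $b_i$ to $b_i'$ (the segment staying inside $\mathcal{B}_x$, which I would check is convex in $b_i$), the projection data moves monotonically: the cap $b_i$ rises, the target coordinate $p_i^V$ and the level $C$ stay fixed, and every other target coordinate $p_l^V$ weakly falls.

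It therefore suffices to show that the $i$-th coordinate of the projection is monotone in exactly these directions, which I expect to be the main obstacle, since it concerns the constrained projection onto a hyperplane-box intersection rather than a plain hyperplane. I would use the water-filling / KKT description $p_j^{VN} = \min(b_j,\ \max(0,\ p_j^V + \lambda))$, with $\lambda$ chosen so the coordinates sum to $C$. Lowering some $p_l^V$ ($l\ne i$) pushes the level $\lambda$ weakly up, which can only raise $p_i^{VN}=\min(b_i,\ p_i^V+\lambda)$. Raising the cap $b_i$ pushes $\lambda$ weakly down but cannot lower $p_i^{VN}$: if it did, then $i$ was capped at $b_i$ before, so at the new smaller level every coordinate term would be no larger than before while coordinate $i$ is strictly smaller, contradicting that the coordinates still sum to $C$. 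Composing these two monotone moves yields $p_i^{VN}(b')\ge p_i^{VN}(b)$.

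Finally, I would note that the argument does not rely on the VCG point lying below the constraint. In the degenerate case $\sum_{j\in S} p_j^V > C$, where the projection pushes coordinates downward, the same reasoning applies with the description $p_j^{VN} = \max(0,\ p_j^V - \mu)$: lowering the other targets $p_l^V$ forces $\mu$ weakly down, and this only raises $p_i^{VN}=\max(0,\ p_i^V-\mu)$, so the monotonicity conclusion is unchanged.
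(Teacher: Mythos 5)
Your proposal is correct and reaches the same structural insight as the paper --- that under an SECC the VN point is just the Euclidean projection of $p^V$ onto the minimum-revenue face $\{\sum_{j\in S}p_j=C\}$ intersected with the box $[0,b_j]$ --- but the execution is genuinely different. The paper derives the piecewise-explicit formula $p_i^{VN}=b_i$ on the active set $A$ and $p_i^{VN}=p_i^V+\delta$ off it, and then runs a continuity argument over the path of increasing $b_i$, splitting it at the ``switches'' where $A$ changes and checking monotonicity on each step separately. You instead use the water-filling/KKT description $p_j^{VN}=\min\bigl(b_j,\max(0,p_j^V+\lambda)\bigr)$ and compare the two endpoint problems directly, decomposing the change into two monotone moves (first lower the other targets $p_l^V$ with caps fixed, then raise the cap $b_i$ with targets fixed) and tracking the multiplier $\lambda$. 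This buys you a cleaner argument: you avoid the paper's somewhat informal claims that the solution ``moves continuously'' and that the formula agrees on both sides of a switch, and your contradiction argument for why raising the cap cannot lower $p_i^{VN}$ is self-contained. You also explicitly prove the fact the paper merely asserts --- that $p_l^V$ is non-increasing in $b_i$ for $l\neq i$, via the slope-at-most-one observation for $W(b,X(b_{-l}))$ --- and you cover the lower bound $p_j\geq 0$ and the degenerate case $\sum_{j\in S}p_j^V>C$, which the paper's formula silently excludes (in fact one can show this case cannot arise under the SECC hypothesis, since the implied VCG constraints force $\sum_{j\in S}p_j^V\leq C$, so your extra case is harmless insurance). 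Two very minor points you should tidy up in a full write-up: the multiplier $\lambda$ need not be unique when the clipped sum is flat, so fix a canonical choice before arguing it moves monotonically; and note that your intermediate problem (new targets, old caps) need not correspond to any actual bid profile, which is fine since the monotonicity claims are about the abstract projection map --- this also means the convexity of $\mathcal{B}_x$ you propose to check is not actually needed.
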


\begin{proof}
To prove this theorem we will first compute an explicit formula for the VN payments.
The payments of all losing bidders are 0. For all winning bidders whose payment is not part of the SECC, the VN payment simply equals the VCG payment.
Let $S$ be the set of winners whose payment is part of the SECC. Then we have the following constraints on the VN payments to $S$, where \eqref{eq:constraint1} is the SECC with some lower bound $B$.
\begin{align}
    \sum_{i\in S} p_i^{VN} &\geq B \label{eq:constraint1}\\
    p_i^{VN} &\leq b_i \quad\text{ for } i\in S \label{eq:constraint2}
\end{align}
The quadratic optimization problem to be solved is minimizing the Euclidean distance between $p^{VN}$ and $p^{V}$ under the constraints above.
For the solution of this optimization, the voluntary participation constraint \eqref{eq:constraint2} will be active for some $i$.
Let $A$ be the set of indices for which \eqref{eq:constraint2} is active, i.e. $p_i^{VN}=b_i$ for $i\in A$.

For the remaining $i\in S\setminus A$, we write $p_i^{VN} = p_i^{V}+\delta_i$.
The single effective core constraint \eqref{eq:constraint1} can now be rewritten as
\begin{equation*}
    \sum_{i\in S\setminus A} \delta_i \geq B - \sum_{i\in S\setminus A} p_i^V - \sum_{i\in A} b_i.
\end{equation*}
Minimizing the Euclidean distance between $p^{VN}$ and $p^{V}$ is equivalent to minimizing $\sum_{i\in S\setminus A} \delta_i^2$.
Since we have a lower bound on the sum of the $\delta_i$, the minimum possible value of $\sum_{i\in S\setminus A} \delta_i^2$ is achieved when all $\delta_i$ are equal, i.e.
\begin{equation}\label{eq:delta}
    \delta_i = \delta = \frac{1}{|S\setminus A|}\left( B - \sum_{j\in S\setminus A} p_j^V - \sum_{j\in A} b_j \right)
\end{equation}
for $i\in S\setminus A$.
With that we conclude
\begin{equation}\label{eq:VN}
    p_{i}^{VN} = \left\{
                \begin{array}{ll}
                  b_i &\text{for } i \in A\\
                  p_i^V +\delta &\text{for } i \in S\setminus A.
                \end{array}
              \right.
\end{equation}

Finally, we verify that VN is non-decreasing.
Assume bidder $i$ increases their bid and this does not change the allocation $x$.
If $i$ is a losing bidder in $x$, their VN payment is 0 and can obviously not decrease.
Furthermore, if $i$ is a winning bidder, but $i$'s payment is not part of the SECC, $i$'s VN payment will equal their VCG payment which does not change as it only depends on the other bids.
From now on, we assume that bidder $i$ is a winning bidder whose payment is part of the SECC, i.e.\ $i\in S$.

Consider how the quadratic optimization problem changes when increasing bidder $i$'s bid.
One constraint and the point $p^{V}$ move continuously with this change.
So clearly the solution, i.e.\ $p^{VN}$, also moves continuously.
During this move some of the constraints \eqref{eq:constraint2} will become active or inactive.
We call the moments when this happens \emph{switches} and examine the \emph{steps} between two consecutive switches.

As $p^{VN}$ changes continuously around switches, equation \eqref{eq:VN} will yield the same result at the switch, no matter if we consider the switching constraint to be active or not.
So for every single step we can assume that the set of active constraints is the same at the beginning and the end of the step.
If suffices to show that bidder $i$'s payment does not decrease in every step between two switches.

Assume bidder $i$'s bid increases from $b_i$ to $b_i'$ in a certain step and let $b = (b_i,b_{-i})$ and $b' = (b_i',b_{-i})$ denote the corresponding bid profiles.
We distinguish two case based on if $i$ is in the set of active constraints in this step or not.
If $i$'s constraint is active, i.e.\ $i\in A$, we have $p_i^{VN}(b,x) = b_i$ and $p_i^{VN}(b',x) = b_i'$ in \eqref{eq:VN}.
Then the voluntary participation constraint implies
\begin{equation*}
p_i^{VN}(b,x)= b_i\leq b_i' = p_i^{VN}(b',x).
\end{equation*}
Otherwise, for $i\notin A$, we have $p_i^{VN}(b,x) = p_i^V +\delta$ and $p_i^{VN}(b',x) = p_i^V +\delta'$ where $\delta'$ is the term in \eqref{eq:delta} for the bidding profile $b'$ with the increased bid.
Then it remains to argue that $\delta\leq \delta'$.
This is true since neither $B$ nor $|S\setminus A|$ in \eqref{eq:delta} change. The sum $\sum_{j\in A} b_j$ also stays the same since $i\notin A$. Furthermore, $\sum_{j\in S\setminus A} p_j^V$ decreases or stays the same because the VCG payments of all other bidders decrease or stay the same when a winning bidder increases their bid.
\end{proof}

So the existence of an SECC is a sufficient condition for the VN payment to be non-decreasing.
It is however, not a necessary condition as the example in Section \ref{app:non_decreasing_not_necessary} in the appendix shows.

\section{Graph Representation of Auction Classes}
\label{sec:graphs}

In the following, we examine for which auction classes there is guaranteed to exist only a single effective core constraint.
To this end, we consider a representation of the auction classes as graphs.
More precisely, we construct a conflict graph from the interest profile of an auction which represents the overlap between the bundles as follows.
For an interest profile $k=(k_1,\ldots,k_n)$ of an SMCA, consider the graph $G=(V,E)$, where $V=\{k_1,\ldots,k_n\}$, i.e.\ each node represents a bidder.
Two nodes are connected by an edge if and only if the corresponding bundles intersect in at least one item. Two simple examples are shown in Figure \ref{fig:auction_graph}.

\begin{figure}[h]
\centering
\begin{tikzpicture}
\draw[fill=black] (0,0) circle (2pt);
\draw[fill=black] (-1,0) circle (2pt);
\draw[fill=black] (1,0) circle (2pt);
\draw[fill=black] (0,-1) circle (2pt);
\draw[fill=black] (0,1) circle (2pt); 
\draw[thick] (1,0) -- (0,0) -- (-1,0);
\draw[thick] (0,1) -- (0,0) -- (0,-1);

\draw[fill=black] (3,0) circle (2pt);
\draw[fill=black] (4,1) circle (2pt);
\draw[fill=black] (5,0) circle (2pt);
\draw[fill=black] (4,-1) circle (2pt);
\draw[thick] (3,0) -- (4,1) -- (5,0) -- (4,-1) -- (3,0);
\end{tikzpicture}
\caption{
Two examples of conflict graphs. The left one corresponds to the interest profile $(\{A\},\{B\},\{C\},\{D\},\{A,B,C,D\})$, the right one to $(\{A,B\}, \{B,C\}, \{C,D\}, \{D,A\})$.
} 
\label{fig:auction_graph}
\end{figure}
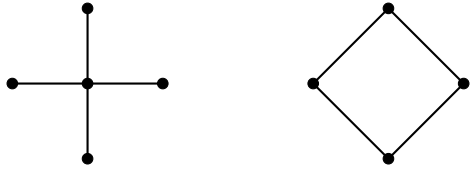
Every set of winners of the auction corresponds to a maximal independent set (MIS) in the graph.

Note that every graph with $n$ nodes is the conflict graph of an SMCA with $n$ bidders, i.e.\ the mapping is surjective: Given a graph, we associate a distinct item with every edge. For every node we then choose the bundle containing all items of adjacent edges.
While different interest profiles are mapped to the same conflict graph, auctions with the same conflict graph lead to equivalent core constraints.

\begin{lemma}
Interest profiles with the same conflict graph have equivalent core constraints (up to renaming the bidders) for all possible bid profiles.
\end{lemma}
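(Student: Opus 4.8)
The plan is to show that the core constraints depend only on the combinatorial overlap structure of the bundles, not on the specific items involved. The key observation is that in Definition \ref{def:core}, each core constraint for a coalition $L \subseteq N$ is determined by the quantities $W(b, X(b_L))$ and $W(b, x_L)$. The second quantity $W(b, x_L) = \sum_{i \in x \cap L} b_i$ depends only on which bidders win and which are in $L$, so it is manifestly independent of the items. The whole argument therefore reduces to showing that $W(b, X(b_L))$ --- the optimal reported welfare achievable using only bidders in $L$ --- depends only on the conflict graph and the bids, not on the underlying item assignment.

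First I would make precise what ``equivalent core constraints up to renaming bidders'' means. Given two interest profiles $k = (k_1,\ldots,k_n)$ and $k' = (k'_1,\ldots,k'_n)$ producing the same conflict graph $G = (V,E)$ (after identifying the bidders, i.e. $V = \{1,\ldots,n\}$ in both), I claim that for every bid profile $b$ and every coalition $L$, the values $W(b, X(b_L))$ agree. The central fact to establish is that the set of feasible allocations within $L$ is the same in both interest profiles: a subset $T \subseteq L$ of bidders can simultaneously win if and only if their bundles are pairwise disjoint, which by the definition of the conflict graph holds if and only if $T$ is an independent set in $G[L]$, the subgraph induced on $L$. Since this condition is phrased purely in terms of $G$, the collection of feasible winning sets inside $L$ is identical for $k$ and $k'$.

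The key step is verifying that pairwise disjointness is equivalent to global feasibility of a candidate winning set. The winner determination constraint requires each item to lie in at most one winning bundle; for single-minded bidders this is exactly the requirement that the chosen winning bundles be pairwise disjoint, since any item shared by two winning bundles would violate the constraint, and conversely pairwise disjoint bundles trivially satisfy it. An edge of $G$ records precisely a nonempty pairwise intersection, so pairwise disjointness of $\{k_i : i \in T\}$ is equivalent to $T$ being independent in $G$. Once feasibility is characterized graph-theoretically, the optimization $W(b, X(b_L)) = \max\{\sum_{i \in T} b_i : T \subseteq L \text{ independent in } G\}$ ranges over the same feasible sets with the same objective in both profiles, hence yields the same optimum. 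Plugging this into every core constraint of Definition \ref{def:core} shows the two polytopes coincide.

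I expect the main (and essentially only) obstacle to be purely a matter of careful bookkeeping rather than genuine difficulty: namely, confirming that the ``at most one winning bundle per item'' constraint is \emph{exactly} captured by pairwise disjointness of the selected bundles, with no subtlety arising from bundles that share no item but are both claimed by the same loser, and checking that the efficient allocation $x$ itself (an MIS in $G$) can be taken to correspond under the identification of bidders. Since every core constraint, as well as the voluntary participation constraints $p_i \leq b_i$, is expressed solely through these welfare values and the bids, establishing equality of all the $W(b, X(b_L))$ immediately gives equivalence of the full constraint systems, and thus of the cores, for all bid profiles.
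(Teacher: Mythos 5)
Your proposal is correct and follows essentially the same route as the paper: identify bidders via the graph isomorphism and observe that both $W(b,x_L)$ and $W(b,X(b_L))$ in Definition \ref{def:core} depend only on the conflict graph and the bids. The only difference is that you spell out the justification (feasible winning sets inside $L$ are exactly the independent sets of $G[L]$, since pairwise disjointness of single-minded bundles is precisely what the item constraint requires), which the paper simply asserts.
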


\begin{proof}
For two interest profiles with isomorphic conflict graphs, let us renumber the bidders in one profile such that the isomorphism maps the $i$-th bidder in one graph to the $i$-th bidder in the other graph for all $i\in\{1,\ldots,n\}$.
Remember Definition \ref{def:core} of the core constraints:
$$\sum_{i \in N \setminus L} p_i(b,x) \geq W(b, X(b_L)) - W(b, x_L)$$
Note that for every $L$, the sets $x_L$ and $X(b_L)$ depend only on the conflict graphs and the bid profile. So the same is true for the whole right side on the inequality.
\end{proof}

Only by looking at the conflict graph, we can tell by the following sufficient conditions if an SECC exists.

\begin{lemma}\label{lemma:multipartite}
Every auction whose conflict graph is a complete multipartite graph has a single effective core constraint.
\end{lemma}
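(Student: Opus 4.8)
The plan is to identify the one constraint that captures the whole core and then verify that every other core constraint is a consequence of it together with the voluntary-participation bounds. First I would exploit the structure of a complete multipartite graph. If its parts are $V_1,\dots,V_r$, then a set of nodes is independent exactly when it is contained in a single part, so the maximal independent sets are precisely the parts themselves. Hence the fixed winner allocation is a full part; write $x=V_j$. Moreover, for any coalition $L\subseteq N$ an efficient sub-allocation $X(b_L)$ consists of the single part, restricted to $L$, with the largest total bid, because nodes from two different parts can never win together. Setting $\beta_l(L)=\sum_{i\in L\cap V_l} b_i$, this reduction gives $W(b,X(b_L))=\max_l \beta_l(L)$ and $W(b,x_L)=\beta_j(L)$, since the only winners inside $L$ are those of $V_j$.

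My candidate SECC is the core constraint for the coalition $L_0=N\setminus V_j$ of all losers. Because $x\cap L_0=\emptyset$, it reads $\sum_{i\in V_j} p_i \ge B$, where $B=\max_{l\ne j}\sum_{i\in V_l} b_i$. I would then show that for an arbitrary coalition $L$ the general core constraint, which by the reduction above takes the form
\[
\sum_{i\in V_j\setminus L} p_i \;\ge\; \max_l \beta_l(L)-\beta_j(L),
\]
follows from the SECC and the voluntary-participation bounds $0\le p_i\le b_i$. Let $l^*$ attain $\max_l\beta_l(L)$. If $l^*=j$ the right-hand side is $0$ and the inequality holds because payments are nonnegative. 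Otherwise $l^*\ne j$, and I would estimate
\[
\sum_{i\in V_j\setminus L} p_i=\sum_{i\in V_j} p_i-\sum_{i\in V_j\cap L} p_i\;\ge\;B-\beta_j(L)\;\ge\;\beta_{l^*}(L)-\beta_j(L),
\]
where the first inequality uses the SECC together with $\sum_{i\in V_j\cap L}p_i\le\sum_{i\in V_j\cap L}b_i=\beta_j(L)$, and the second uses $B\ge\sum_{i\in V_{l^*}}b_i\ge\beta_{l^*}(L)$, valid precisely because $l^*\ne j$. This is the required bound, so the loser-coalition constraint plus voluntary participation implies every core constraint; the reverse inclusion is immediate, so the two polytopes coincide and an SECC exists.

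The only genuinely non-routine step is the combinatorial reduction in the first paragraph: recognizing that in a complete multipartite graph the maximal independent sets are exactly the parts and that efficient sub-allocations collapse to a maximum of the per-part bid sums. Once that structure is established, verifying the general constraint reduces to the two short inequality chains above, in which the voluntary-participation upper bounds control the payment already spent inside $L$ while the SECC supplies the total. I would also observe in passing that the right-hand side $\max_l\beta_l(L)-\beta_j(L)$ is always nonnegative, since $j$ is itself one of the competing parts, which is what makes the nonnegativity of payments enough to settle the degenerate case $l^*=j$.
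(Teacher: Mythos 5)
Your proof is correct and follows essentially the same route as the paper: both identify the winner allocation as a full part, take the all-losers coalition constraint $\sum_{i\in V_j}p_i\ge B$ as the SECC, and derive every other core constraint from it using the voluntary-participation bounds $p_i\le b_i$. Your $\beta_l(L)$ notation merely packages the paper's two-case analysis (whether the sub-auction on $L$ is won by $B_w\cap L$ or by the best competing part) into a single inequality chain.
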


\begin{proof}
If the conflict graph of an auction is a complete multipartite graph,  the bidders can be grouped into $k$ bidder groups $B_1, \ldots, B_k$, where no edge between two bidders within the same group exists, but any two bidders in different groups are connected by an edge.

We argue that the winning set must be one of these bidder groups.
A winning set clearly can not contain bidders from different groups since their bundles overlap. Moreover, if the winning set is only a subset of a bidder group, the current winner allocation does not maximize reported social welfare, since the rest of the group can simply be added to the winners.

Let $B_w$ be the winning bidder group.
We now argue that only a single effective core constraint exists.
For any subset $L\subseteq N$, we have the core constraint
\begin{equation}\label{eq:core_constraint}
    \sum_{i \in N \setminus L} p_i(b,x) \geq W(b, X(b_L)) - W(b, x_L).
\end{equation}
First, note that the core constraint is not effective if $N\setminus L$ contains a losing bidder. Adding this losing bidder to $L$ does not change the left-hand side (LHS) of \eqref{eq:core_constraint} since this losing bidder's payment must be 0. On the other hand, the right-hand side (RHS) will not decrease since $W(b,x_L)$ does not change. Hence, the new constraint covers the previous one.

So we only need to consider the core constraints with $N\setminus L \subset B_w$. Choose $L'$ such that $B_w\setminus L' = N\setminus L$.
Furthermore, let $B_o$ be the winning bidder group in the auction with only the bidders $N\setminus B_w$.
The term $W(b, X(b_L))$ on the RHS equals either $\sum_{i\in L'}b_i$ or $\sum_{i\in B_o} b_i$.
If the former is true, the RHS is 0 and the constraint is clearly not effective.
In the latter case, the constraint is of the form
\begin{equation*}
    \sum_{i\in B_w\setminus L'} p_i(b,x) \geq \sum_{i\in B_o} b_i - \sum_{i\in L'}b_i.
\end{equation*}
Because of $p_i(b,x)\leq b_i$, any such constraints is covered by the constraint
\begin{equation*}
    \sum_{i\in B_w} p_i(b,x) \geq \sum_{i\in B_o} b_i
\end{equation*}
which is therefore the single effective core constraint.

\end{proof}

Note, both graphs in Figure \ref{fig:auction_graph} are complete bipartite meaning an SECC exists for any auctions with such a conflict graph.
Another sufficient condition for the existence of an SECC is the following.

\begin{lemma}
If every MIS in the conflict graph contains at most 2 nodes, the auction has a single effective core constraint.
\label{lemma:mis2}
\end{lemma}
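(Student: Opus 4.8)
The plan is to use the hypothesis to bound the number of winners and then single out one distinguished constraint. First I would observe that the assumption forces the independence number of the conflict graph to be at most two: any independent set of three nodes could be extended to a maximal independent set of size at least three, contradicting the assumption. Since every set of winners is a maximal independent set, the auction has exactly one or two winners. If there is a single winner $w$, the only non-redundant constraint in Definition \ref{def:core} is the one for $N\setminus L=\{w\}$, which reads $p_w\geq p_w^V$; together with voluntary participation it already cuts out the whole core, so it is the SECC. The substantial case is two winners $a$ and $b$, for which the natural candidate is the joint constraint coming from $L=N\setminus\{a,b\}$,
\begin{equation*}
p_a+p_b \geq W\big(b, X(b_{N\setminus\{a,b\}})\big) =: W_{ab}.
\end{equation*}

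Next I would prune the constraint set exactly as in the proof of Lemma \ref{lemma:multipartite}: any core constraint whose index set $N\setminus L$ contains a losing bidder is dominated by the constraint obtained after moving that loser into $L$, since a loser adds $0$ to the left-hand side while the right-hand side cannot decrease. This leaves only the constraints with $N\setminus L\subseteq\{a,b\}$. Apart from the joint constraint, these are the two constraints for $N\setminus L=\{a\}$ and $N\setminus L=\{b\}$; using $x_{N\setminus\{a\}}=\{b\}$ they collapse to the VCG constraints $p_a\geq p_a^V$ and $p_b\geq p_b^V$. So the lemma reduces to showing that these two VCG constraints are implied by the joint constraint together with voluntary participation, which is exactly what Definition \ref{def:single_eff_core_constr} demands.

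The hard part will be precisely this implication. Combining the joint constraint with $p_b\leq b_b$ yields $p_a\geq W_{ab}-b_b$, so it remains to prove $p_a^V\leq W_{ab}-b_b$, i.e.\ $W(b,X(b_{-a}))\leq W_{ab}$: deleting the winner $a$ must not let the remaining bidders beat the welfare $W_{ab}$ already available without both $a$ and $b$. I would split on whether the efficient allocation $X(b_{-a})$ still contains $b$. If it does not, then $X(b_{-a})$ lives among $N\setminus\{a,b\}$ and $W(b,X(b_{-a}))=W_{ab}$, so the VCG constraint for $a$ follows at once; doing this symmetrically for $b$ makes the joint constraint the SECC. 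The dangerous case is when $X(b_{-a})$ keeps $b$ and pairs it with a fresh partner $c$ replacing $a$, so that $W(b,X(b_{-a}))>W_{ab}$ and the joint constraint no longer dominates. There one is forced to argue that a single VCG constraint can instead play the role of the SECC, which in turn needs tight control—via the complement graph being triangle-free—over how the fallback partners of $a$ and of $b$ can coexist. I expect this coupling to be the real crux: when both winners simultaneously retain independent fallback partners (as already happens for a path on four nodes, where the two VCG constraints become genuinely independent second-price conditions), this is exactly the configuration the argument must either rule out under the maximal-independent-set hypothesis or otherwise accommodate.
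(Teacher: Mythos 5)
Your setup coincides with the paper's: after pruning every constraint whose set $N\setminus L$ contains a loser (the same domination argument as in Lemma \ref{lemma:multipartite}), only the joint constraint $p_a+p_b\geq W_{ab}$ and the two individual constraints for $N\setminus L=\{a\}$ and $N\setminus L=\{b\}$ survive. The crux you flag at the end is genuine, and the implication you are hoping to prove there is in fact false, so no refinement of the argument will close it. Take the path $1-2-3-4$ with $b_1=b_4=10$, $b_2=5$, $b_3=1$: every MIS has two nodes, the winners are $\{1,4\}$, the joint constraint is $p_1+p_4\geq\max(b_2,b_3)=5$, but the individual constraint for bidder $1$ is $p_1\geq W(b,X(b_{\{2,3,4\}}))-b_4=(b_2+b_4)-b_4=5$, which the joint constraint together with $p_4\leq b_4$ (yielding only $p_1\geq-5$) does not imply. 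This is exactly your ``dangerous case'' in which $X(b_{-a})$ keeps the other winner and picks up a fresh partner.

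What closes that case is not an argument but a convention you have not invoked. The constraint for $N\setminus L=\{i\}$ is, by Definition \ref{def:core}, precisely $p_i\geq p_i^V$, and the paper stipulates in Section \ref{sec:secc_non} (just before Definition \ref{def:single_eff_core_constr}) that such ``VCG-constraints'' are disregarded when determining which core constraints are effective. Under that convention the only surviving non-VCG constraint is the joint one, and the lemma is immediate; without it, your $P_4$ configuration is a counterexample to the literal reading of Definition \ref{def:single_eff_core_constr}. (The paper's own proof also elides this case: it asserts the individual constraints are ``either $p_i\geq 0$ or $p_i\geq B-b_j$'', which omits the possibility $X(b_{-i})=\{j,c\}$, giving $p_i\geq b_c$.) So your plan is correct up to its last step, but as written it is an honestly unfinished proof; to complete it you must explicitly adopt the paper's treatment of VCG-constraints rather than try to derive them from the joint constraint.
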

\begin{proof}
As seen in the previous proof, we only need to consider core constraints where $N \setminus L$ contains only winning bidders. So we get a constraint of the form $p_i + p_j\geq B$, and one each for $p_i$ and $p_j$. These are either $p_i\geq 0$ or $p_i\geq B-b_j$ (and similarly for $p_j$). Because $p_i\leq b_i$ and $p_j\leq b_j$, $p_i + p_j\geq B$ is the only effective core constraint.
\end{proof}

The Lemmas \ref{lemma:multipartite} and \ref{lemma:mis2} show two sufficient conditions for the existence of a single effective core constraint.
They are however not necessary.
This is illustrated by the example shown in Section \ref{app:secc_not_necessary} of the appendix.
While its conflict graph has a MIS of size larger than 2 and is not a complete multipartite graph, we prove that it has a SECC.

So looking at the conflict graph can tell us when the auction has a SECC and consequently, if certain payment rules are non-decreasing for this auction.
On the other hand, by understanding induced subgraphs of the conflict graph, we can also determine that the non-decreasing property of a payments rule is violated for this auction.

\begin{lemma}
Consider two interest profiles $k$ and $k'$ with corresponding conflict graphs $G$ and $G'$.
If $G'$ is an induced subgraph of $G$ and a payment rule is not non-decreasing for $k'$, then the payment rule is also not non-decreasing for $k$.
\end{lemma}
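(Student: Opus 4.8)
The plan is to reduce the non-decreasing property on the larger profile $k$ to that on the smaller profile $k'$ by finding a bid profile for $k$ whose behavior mirrors exactly the violating instance in $k'$. Since $G'$ is an induced subgraph of $G$, there is a subset $N'\subseteq N$ of the bidders of $k$ such that the conflict graph induced on $N'$ is isomorphic to $G'$. The idea is to suppress all bidders outside $N'$ by setting their bids to $0$, so that the auction on $k$ effectively collapses to the auction on $k'$ restricted to $N'$.

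First I would take the violating instance for $k'$: by assumption there is a bidder $i$, an allocation $x'$, and two bid profiles $b',\tilde b'\in\mathcal B_{x'}$ with $\tilde b_i'\ge b_i'$, agreeing off coordinate $i$, for which $p_i(\tilde b',x')<p_i(b',x')$. I would then build bid profiles $b,\tilde b$ for $k$ by copying the bids of $b'$ (resp. $\tilde b'$) onto the bidders in $N'$ via the isomorphism and assigning bid $0$ to every bidder in $N\setminus N'$. The key claim to establish is that in the auction $(k,b)$ the zero-bidding bidders never win and never participate in any effective core constraint, so that the winner determination and the core polytope for $(k,b)$, restricted to the coordinates of $N'$, coincide with those of $(k',b')$. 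Granting this, the payment $p_i$ is identical in both auctions, the allocation $x$ (the image of $x'$ together with possibly some forced-in zero bidders) is efficient in both $b$ and $\tilde b$, and the strict decrease $p_i(\tilde b,x)<p_i(b,x)$ transfers verbatim, contradicting non-decreasingness for $k$.

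The main obstacle will be the claim that adding the zero-bidding bidders does not alter the relevant core constraints or the efficient allocation. Two subtleties must be handled. Concerning winner determination, a bidder with bid $0$ may be added to a winning set without changing the reported welfare $W(b,x)=\sum_{i\in x}b_i$; so the efficient allocation for $k$ may formally include some zero bidders, but it restricts on $N'$ to exactly the efficient allocation $x'$ for $k'$, and one must check $b,\tilde b\in\mathcal B_x$ for a common such $x$ (one may fix $x$ to be $x'$ augmented by a fixed maximal set of compatible zero bidders). Concerning the core, a core constraint of $(k,b)$ indexed by $L\subseteq N$ has right-hand side $W(b,X(b_L))-W(b,x_L)$; because bidders outside $N'$ contribute $0$ to every welfare term, each such right-hand side equals the corresponding quantity for some $L\cap N'$ in the auction $(k',b')$, and the left-hand side involves only payments of bidders in $N\setminus L$, whose zero-bidding members have payment $0$ by voluntary participation. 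Hence every core constraint of $(k,b)$ is equivalent to a core constraint of $(k',b')$ (together with the trivial $p_j\ge 0$ for zero bidders), so the two core polytopes agree on the $N'$-coordinates and the VN (or any fixed core-selecting) payment to bidder $i$ is unchanged.

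Once this equivalence of core polytopes is pinned down, the remainder is immediate: the payment rule, being defined purely in terms of the core polytope and the VCG point, yields the same value $p_i$ in $(k,\cdot)$ as in $(k',\cdot)$, and I would note that the VCG point likewise transfers since the zero bidders contribute nothing to any welfare difference. Therefore the strict inequality witnessing the failure of non-decreasingness for $k'$ is inherited by $k$, completing the proof by contraposition. I would state the argument for a general fixed core-selecting payment rule so that it applies uniformly to VN, proxy, and proportional payments, since the statement of the lemma is phrased for an arbitrary payment rule.
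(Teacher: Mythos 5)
Your proposal is correct and follows essentially the same approach as the paper: the paper's own proof simply sets the bids of all bidders outside $G'$ to zero (or arbitrarily small) and observes that the violating instance for $k'$ transfers to $k$. You work out in more detail the verification that zero bidders do not affect the winner determination or the effective core constraints, which the paper leaves implicit.
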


\begin{proof}
According to Definition \ref{def:non-decreasing-weak}, a payment rule not being non-decreasing for $k'$ means there exists an allocation $x$ and bid profiles $b$ and $b'$ with $b_i'\geq b_i$ and $b_{-i}=b_{-i}'$ such that $p_i(b',x)<p_i(b,x)$.
By simply choosing zero (or arbitrarily small) bids for all bidders in $G\setminus G'$, we also find two bid profiles with the same property for $k$.
\end{proof}

Bosshard et al.\ proved that the VN payment violates the non-decreasing property by proposing an interest profile and corresponding bids \cite{bosshard2018non}.
Hence, VN is also not non-decreasing for any auction that contains the graph of this example as an induced subgraph.
This principle motivates the search for minimal examples of overbidding, as well as proving further sufficient conditions for when overbidding does not occur.
In the following, we show a sufficient condition for the non-decreasing property, without relying on the existence of a single effective core constraint.

\begin{theorem}\label{thm:mis3}
The VN-payment rule is non-decreasing for all auctions that have an interest profile for which every winner allocation contains at most three winners.
\end{theorem}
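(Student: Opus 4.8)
The plan is to fix the allocation $x$ and argue per allocation as in Definition~\ref{def:non-decreasing-weak}; since $|x|\le 3$, the relevant payment variables number at most three. If $|x|\le 2$, then after discarding the VCG-constraints $p_j\ge p_j^V$ the only non-trivial core constraint is a single pairwise (or empty) constraint, so the auction has a single effective core constraint and Theorem~\ref{thm:single_core_constraint} applies directly. Hence the real content is the case $|x|=3$ with winners $\{1,2,3\}$. As in the proof of Lemma~\ref{lemma:mis2}, the only constraints that can be effective (besides the VCG-constraints, which we may again disregard) are the three pairwise constraints $p_j+p_\ell\ge B_{j\ell}$ and the grand constraint $p_1+p_2+p_3\ge B_{123}$, together with the box constraints $0\le p_j\le b_j$.

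First I would record the monotonicity facts that drive everything, for a winner $i$ raising its bid $b_i$ with the allocation fixed. Writing each effective sum-constraint as $\sum_{j\in T}p_j\ge B_T$ with $T=N\setminus L$: if $i\in T$ then $B_T$ is \emph{independent} of $b_i$, since neither $W(b,X(b_L))$ nor $W(b,x_L)$ involves $b_i$ (because $i\notin L$); if $i\notin T$ then $i\in L$ is a winner, so $W(b,x_L)$ grows at rate $1$ in $b_i$ while $W(b,X(b_L))$ grows at rate $\mathbf{1}[i\in X(b_L)]\le 1$ by the envelope theorem, whence $B_T$ is \emph{non-increasing} in $b_i$. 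Likewise $p_i^V$ is constant in $b_i$, the values $p_j^V$ for $j\ne i$ are non-increasing (as already used in Theorem~\ref{thm:single_core_constraint}), the bound $b_i$ increases, and the bounds $b_j$ for $j\ne i$ are constant.

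Next I would set up the computation of $p^{VN}$. The minimum-revenue core lies in a hyperplane $\sum_j p_j=R^*$, so with three winners it is a point, a segment, or a triangle, and $p^{VN}$ is the Euclidean projection of $p^V$ onto it. Exactly as in the proof of Theorem~\ref{thm:single_core_constraint}, as $b_i$ increases continuously the data ($p^V$, the $B_T$, and $R^*$) move continuously and piecewise-linearly, so $p^{VN}$ moves continuously; it therefore suffices to prove $p_i^{VN}$ is non-decreasing on each subinterval on which the active-constraint set is constant. On such a subinterval $p^{VN}$ is the projection of $p^V$ onto the affine span of the active equalities, i.e.\ $p^{VN}=p^V+\sum_T\lambda_T\mathbf{1}_T+\sum_j\mu_j e_j$ with multipliers solved from the active constraints, which yields an explicit affine formula for $p_i^{VN}$ in terms of the $B_T$, the $b_j$, and the $p_j^V$. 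The proof then reduces to a finite case analysis over the possible active sets. In the benign regimes the monotonicity facts close the argument at once: when only the grand constraint is active one gets $p_i^{VN}=p_i^V+\tfrac13\bigl(B_{123}-\sum_j p_j^V\bigr)$ as in Theorem~\ref{thm:single_core_constraint}; when all three pairwise constraints are active one computes
\begin{equation*}
p_i^{VN}=\tfrac12\bigl(B_{ij}+B_{i\ell}-B_{j\ell}\bigr),
\end{equation*}
which is non-decreasing because $B_{ij}$ and $B_{i\ell}$ (whose index sets contain $i$) are constant while $B_{j\ell}$ is non-increasing; the regimes with an active box constraint $p_j=b_j$ are handled by the same device as the set $A$ in Theorem~\ref{thm:single_core_constraint}.

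The main obstacle is the regime in which bidder $i$'s payment lies on some but not all active pairwise constraints, e.g.\ $\{i,j\}$ and $\{j,\ell\}$ active, so that $B_{j\ell}$ (which does not contain $i$) enters the formula for $p_i^{VN}$ through the shared coordinate $p_j$. Here the crude sign of each individual term no longer suffices: differentiating the explicit formula produces a combination of indicator terms such as $\mathbf{1}[i\notin X(b_L)]+\mathbf{1}[i\notin X(b_{-j})]-\mathbf{1}[i\notin X(b_{-\ell})]$ whose sign is not termwise obvious, and to control it one must additionally impose the active revenue constraint $\sum_j p_j=R^*$ that pins down these vertices (and which I suppressed in the sketch above), and then exploit exchange relations among the optimal sets $X(b_L)$ for the nested bidder subsets $L$. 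Establishing that these relations force the required inequality in every such configuration---using that with only three winners there are only finitely many nested subsets and selection patterns to check---is the delicate heart of the argument; all remaining configurations reduce to the benign cases above by symmetry.
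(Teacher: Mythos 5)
Your reduction to the three-winner case, the monotonicity facts about the bounds $B_T$ and the VCG payments, and the continuity/switch decomposition all match the paper's strategy, and your benign cases (only the grand constraint active; all three pairwise constraints active; box constraints handled as in Theorem~\ref{thm:single_core_constraint}) are handled correctly. But the proposal stops exactly where the proof has to do its work: you label the configurations in which a pairwise constraint not containing $i$ influences $p_i^{VN}$ through a shared coordinate as ``the delicate heart of the argument'' and defer it to an unspecified finite case check using ``exchange relations among the optimal sets $X(b_L)$.'' That case is not optional --- it is the entire content of the theorem beyond what Theorem~\ref{thm:single_core_constraint} already gives --- and as written it is not proved.

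The paper closes this with two observations you are missing, and neither requires the indicator bookkeeping you anticipate. First, a dichotomy on the minimum-revenue core: if the three pairwise constraints fully cover the plane of the grand constraint, the minimum-revenue core is the single point where all three are tight, and for a pairwise constraint to be non-redundant given the grand constraint and $p_j\le b_j$, the excluded winner must appear in the corresponding sub-auction's efficient allocation (e.g.\ $3\in X(b_{N\setminus\{1,2\}})$, else $W(b,X(b_{N\setminus\{1,2\}}))\le W(b,X(b_{N\setminus\{1,2,3\}}))$ makes that constraint redundant). This forces every right-hand side to be locally constant in $b_3$, so the point does not move at all. Second, when instead the grand constraint determines the minimum revenue $M$ (which is independent of $b_3$), the remaining configurations are ``grand plus one pairwise constraint,'' with closed forms $p_3^{VN}=b_3+M-W(b,X(b_{N\setminus\{1,2\}}))$ and $p_3^{VN}=p_3^V+\tfrac12\bigl(B_{23}-p_2^V-p_3^V\bigr)$ respectively, whose monotonicity follows immediately from the facts you already recorded. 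To complete your version you would need to prove this redundancy criterion and then run the (now short) case analysis; the termwise sign analysis of the indicator combination is the wrong tool, because the relevant indicator is pinned down by the non-redundancy of the active constraint itself rather than by an exchange argument.
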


\begin{proof}
The case that the auction is won by two bidders is already treated in Lemma \ref{lemma:mis2}.
Assume three bidders win the auction, and without loss of generality, let the winners be bidders 1, 2 and 3.
Then the core constraints are
\begin{align}
    p_1^{VN} + p_2^{VN} + p_3^{VN} &\geq W(b,X(b_{N\setminus\{1,2,3\}}))\label{eq:3winner_cc1}\\
    p_1^{VN} + p_2^{VN} &\geq W(b,X(b_{N\setminus\{1,2\}})) - b_3\label{eq:3winner_cc2}\\
    p_2^{VN} + p_3^{VN} &\geq W(b,X(b_{N\setminus\{2,3\}})) - b_1\label{eq:3winner_cc3}\\
    p_1^{VN} + p_3^{VN} &\geq W(b,X(b_{N\setminus\{1,3\}})) - b_2\label{eq:3winner_cc4}.
\end{align}
Remember, that we can ignore core constraints of the form $p_i^{VN}\geq p_i^V$ (VCG-constraints).
Furthermore, assume without loss of generality that bidder 3 increases their bid.

Let $M$ be the minimum revenue determined by the core constraints. There are two possibilities for the minimum revenue core:
First, if the plane described by \eqref{eq:3winner_cc1} is not fully covered by the constraints \eqref{eq:3winner_cc2}, \eqref{eq:3winner_cc3} and \eqref{eq:3winner_cc4}, the minimum revenue is $M=W(b,X(b_{N\setminus\{1,2,3\}}))$.
We further discuss this case in the next paragraph.
The second possibility is that the plane described by \eqref{eq:3winner_cc1} is fully covered by the other constraints, and $M>(b,X(b_{N\setminus\{1,2,3\}}))$.
Then the minimum revenue core is a single point determined by equality holding in \eqref{eq:3winner_cc2}, \eqref{eq:3winner_cc3} and \eqref{eq:3winner_cc4}.
Since the right sides of \eqref{eq:3winner_cc2}, \eqref{eq:3winner_cc3} and \eqref{eq:3winner_cc4} are not larger than the right side of \eqref{eq:3winner_cc1}, all three constraints are needed to fully cover the plain.
In particular, bidder 3 must be part of $X(b_{N\setminus\{1,2\}})$, otherwise \eqref{eq:3winner_cc1} implies \eqref{eq:3winner_cc2}, and the plane is not fully covered.
But this means, that increasing $b_3$ does not change the right sides of \eqref{eq:3winner_cc2}. As the same is true for \eqref{eq:3winner_cc3} and \eqref{eq:3winner_cc4}, increasing $b_3$ does not move the minimum revenue core and thereby the VN payment point.

In the following, we assume that constraint \eqref{eq:3winner_cc1} is active, and $M=W(b,X(b_{N\setminus\{1,2,3\}}))$.
We argue similarly to the proof of Theorem \ref{thm:single_core_constraint}:
All changes in the VN payments are continuous in the change of the bid $b_3$.
At any time, a number of constraints are active, and this set of active constraints changes at certain switches.
To prove, the payment does not decrease overall, it suffices to prove it does not decrease between two switches, when the set of active constraints does not change.
In the following, we distinguish three possible cases.

\textbf{1st case:} Only constraint \eqref{eq:3winner_cc1} is active.
Hence, the VCG payments are
\begin{equation*}
    \left(p_1^V+\frac{M-(p_1^V+p_2^V+p_3^V)}{3}, p_2^V+\frac{M-(p_1^V+p_2^V+p_3^V)}{3}, p_3^V+\frac{M-(p_1^V+p_2^V+p_3^V)}{3}\right).
\end{equation*}
When $b_3$ is increased, the minimum revenue $M=W(b,X(b_{N\setminus\{1,2,3\}}))$ does not change. Furthermore, $p_1^V$ and $p_2^V$ stay the same or decrease.
So bidder 3's payment does not decrease according to the formula above.

\textbf{2nd case:}
The constraints \eqref{eq:3winner_cc1} and \eqref{eq:3winner_cc2} are active.
This implies
\begin{align*}
    p_3^{VN} &= b_3 + M - W(b,X(b_{N\setminus\{1,2\}}))\\
    p_1^{VN} + p_2^{VN} &= W(b,X(b_{N\setminus\{1,2\}})) - b_3.
\end{align*}
As $b_3$ increases, $W(b,X(b_{N\setminus\{1,2\}}))$ can increase by at most as much as $b_3$. Hence, the payment $p_3^{VN}$ will not decrease.

\textbf{3rd case:}
Constraints \eqref{eq:3winner_cc1} and \eqref{eq:3winner_cc3} are active. (Note that the case when constraints \eqref{eq:3winner_cc1} and \eqref{eq:3winner_cc4} are active is equivalent due to symmetry.)
This implies
\begin{align*}
    p_1^{VN} &= b_1 + M - W(b,X(b_{N\setminus\{2,3\}}))\\
    p_2^{VN} + p_3^{VN} &= W(b,X(b_{N\setminus\{2,3\}})) - b_1.
\end{align*}
These equations describe a line on which the VN payment points lies.
Increasing $b_3$ does not change the right side of the equation.
Furthermore, it may decrease $p_2^V$, but does not change $p_3^V$.
Since $p^{VN}$ is the closest point to $p^V$ on the line, this can, if it causes a change, only lead to a decrease of $p_2^{VN}$ and an increase of $p_3^{VN}$.

\end{proof}

\section{Over-bidding for Non-decreasing Payment Rules}

In this section, we show that no over-bidding is profitable for non-decreasing payment rules, which proves a conjecture by \cite{bosshard2018non}.

\subsection{Over-bidding on Winning Bids}

As long as an over-bid does not change the winner allocation compared to the truthful bid, it will not increase the bidders utility. This follows directly from the definition of the non-decreasing payment rules: Increasing a bid will not decrease the payment. However, the allocated value stays the same since the allocation does not change.

Since any efficient allocation remains efficient when increasing a winning bid, over-bidding when the truthful bid is already a winning bid is not profitable.
This implies the following lemma.

\begin{lemma}
\label{lem:overbid_winning_SM}
Consider an SMCA with a core-selecting, non-decreasing payment rule.
If for a bidder $i$ and fixed bids of the other bidders $b_{-i}$, the truthful bid $v_i$ is a winning bid, then overbidding decreases bidder $i$'s utility.
\end{lemma}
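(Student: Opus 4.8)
The plan is to reduce the statement to a single application of the non-decreasing property (Definition \ref{def:non-decreasing-weak}). I would fix the other bidders' bids $b_{-i}$ together with an efficient allocation $x = X(v_i, b_{-i})$ in which bidder $i$ wins, so that $i \in x$. Write $b = (v_i, b_{-i})$ for the truthful profile and $b' = (b_i', b_{-i})$ with $b_i' \geq v_i$ for an arbitrary overbid, and set $\Delta = b_i' - v_i \geq 0$. Since the true value $v_i$ is private and fixed, bidder $i$'s utility under the allocation $x$ is $v_i - p_i(\cdot, x)$, so showing that overbidding does not increase utility is the same as showing that $i$'s payment does not decrease when passing from $b$ to $b'$.

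The central step is to check that $x$ stays efficient under $b'$, i.e. that $b' \in \mathcal{B}_x$, so that the non-decreasing property becomes applicable. I would split the feasible allocations into those containing $i$ and those not: for any $y$ with $i \in y$ we have $W(b', y) = W(b, y) + \Delta$, whereas for $y$ with $i \notin y$ we have $W(b', y) = W(b, y)$. Because $x$ contains $i$, raising the bid increases $W(\cdot, x)$ by exactly $\Delta$, and comparing $x$ with any competitor $y$ gives $W(b', x) - W(b', y) \geq W(b, x) - W(b, y) \geq 0$, where the first inequality is an equality when $i \in y$ and follows from $\Delta \geq 0$ when $i \notin y$. Hence $x$ remains welfare-maximizing, so $b' \in \mathcal{B}_x$. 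I expect this efficiency-preservation argument to be the only genuine obstacle, and it is a mild one; the remaining care is in making sure the realized allocation is handled correctly.

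Finally, since $b, b' \in \mathcal{B}_x$, $b_i' \geq v_i$, and $b_{-i} = b_{-i}'$, Definition \ref{def:non-decreasing-weak} yields $p_i(b', x) \geq p_i(b, x)$, and therefore $v_i - p_i(b', x) \leq v_i - p_i(b, x)$, which is exactly the claim that overbidding does not increase (weakly decreases) $i$'s utility. To rule out the worry that the winner determination might switch to a different efficient allocation omitting $i$, I would note that for a strict overbid $\Delta > 0$ no efficient allocation of $b'$ can exclude $i$: any $y$ with $i \notin y$ that were efficient under $b'$ would satisfy $W(b, y) = W(b', y) = W(b', x) = W(b, x) + \Delta > W(b, x)$, contradicting the efficiency of $x$ under $b$. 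Thus bidder $i$ necessarily remains a winner under the overbid, and the payment comparison above governs the realized utility, completing the argument.
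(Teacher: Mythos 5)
Your proposal is correct and follows essentially the same route as the paper: the paper likewise observes that any efficient allocation containing $i$ remains efficient when $i$'s bid increases, and then invokes the non-decreasing property to conclude the payment cannot drop while the allocated value is unchanged. Your write-up merely makes explicit the welfare comparison $W(b',x)-W(b',y)\geq W(b,x)-W(b,y)\geq 0$ and the tie-breaking point that no efficient allocation under a strict overbid can exclude $i$, both of which the paper leaves implicit.
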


\subsection{Over-bidding on Losing Bids}

A losing bid has zero utility due to voluntary participation, i.e., no bundle is acquired, no value is gained, and the payment is zero. A losing over-bid equally results in zero utility. Thus, an over-bidding strategy that increases the utility must result in winning the auction.

The following lemma shows that it is not possible to gain a positive utility by over-bidding, where the truthful bid is losing.

\begin{lemma}
\label{lem:overbid_losing_SM}
Consider an SMCA with a core-selecting, non-decreasing payment rule.
If for a bidder $i$ and fixed bids of the other bidders $b_{-i}$, the truthful bid $v_i$ is a losing bid, then over-bidding is not profitable for bidder $i$.
\end{lemma}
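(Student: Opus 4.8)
The plan is to reduce the statement to a single inequality about the payment and then bound that payment from below using the core constraints. By the remark preceding the lemma, a losing over-bid yields zero utility, exactly like the truthful losing bid, so an over-bid can only possibly be profitable if it makes bidder $i$ win. Hence I would assume the over-bid $b_i > v_i$ produces an efficient allocation $x$ with $i \in x$, and write the resulting utility as $v_i - p_i(b,x)$, where $b = (b_i, b_{-i})$. It then suffices to show $p_i(b,x) \ge v_i$, which makes the utility non-positive and therefore no better than the zero utility obtained by bidding truthfully and losing.

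The key step is a lower bound on $p_i(b,x)$. Because the payment rule is core-selecting, the core constraint of Definition \ref{def:core} applied to the coalition $L = N \setminus \{i\}$ (the VCG-constraint for bidder $i$) gives
\[
p_i(b,x) \;\ge\; W(b, X(b_{-i})) - W(b, x_{-i}) \;=\; p_i^V(b,x).
\]
The right-hand side depends only on $b_{-i}$ and on the set $x_{-i}$ of winners other than $i$; since $x$ is efficient and contains $i$, the set $x_{-i}$ is precisely the welfare-maximizing set among the bidders whose bundles are disjoint from $k_i$. Consequently $p_i^V(b,x)$ equals the \emph{threshold} $t_i := W(b,X(b_{-i})) - W(b,x_{-i})$ at which bidder $i$ turns from losing into winning, and it is independent of bidder $i$'s own bid.

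Next I would connect the threshold to $v_i$. Winner determination for a single-minded bidder is monotone in the own bid: bidder $i$ wins exactly when their bid is at least $t_i$. Since the truthful bid $v_i$ is losing for the same $b_{-i}$, this forces $v_i \le t_i$. Chaining the inequalities yields $p_i(b,x) \ge t_i \ge v_i$, i.e.\ the utility $v_i - p_i(b,x) \le 0$, which completes the argument.

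The main obstacle is making the threshold characterization fully rigorous, in particular identifying $x_{-i}$ with the best $k_i$-disjoint set and handling ties in the winner determination so that ``$v_i$ is losing'' translates cleanly into $v_i \le t_i$ regardless of the tie-breaking convention. I note that only the core-selecting hypothesis is strictly needed here; the non-decreasing property offers an alternative route — walking the bid up from $t_i$ to $b_i$, during which the allocation $x$ is fixed so that Definition \ref{def:non-decreasing-weak} applies — but the VCG-constraint bound above is the shorter path and dovetails with Lemma \ref{lem:overbid_winning_SM} for the combined theorem.
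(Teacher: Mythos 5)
Your proof is correct, and it rests on the same key mechanism as the paper's: the core constraint for $L = N\setminus\{i\}$ (the VCG-constraint) bounds bidder $i$'s payment below by $p_i^V(b,x)$, which is then shown to be at least $v_i$. Where you differ is in how that last bound is obtained. The paper first reduces to the \emph{minimal} winning overbid $b_i^o = v_i + \epsilon$ — invoking Lemma \ref{lem:overbid_winning_SM}, and hence the non-decreasing hypothesis, to dismiss larger overbids — and then computes $p_i^V(b^o,x^o) = b_i^o > v_i$ exactly at that point. You instead observe that $p_i^V(b,x) = W(b,X(b_{-i})) - W(b,x_{-i})$ is a threshold $t_i$ independent of $b_i$ (because $x_{-i}$ is the welfare-maximizing $k_i$-compatible set whenever $i$ wins efficiently), and that $v_i$ losing forces $v_i \le t_i$; this handles every winning overbid at once. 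The payoff of your route is that the non-decreasing property is never used — the lemma holds for any core-selecting rule — which is a genuine (if mild) strengthening of the statement; the cost is that you must argue the threshold characterization and tie-breaking carefully, which you correctly flag as the only delicate point and which does go through (if some efficient allocation omits $i$ at bid $v_i$, then $v_i + W(b,x_{-i}) \le W(b,X(b_{-i}))$ regardless of which efficient allocation the mechanism selects). Note also that you conclude utility $\le 0$ (weakly unprofitable), whereas the paper's minimal-overbid computation gives strictly negative utility; both suffice for "not profitable" since the truthful losing bid earns exactly $0$.
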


\begin{proof}
Consider a bidder $i$ who loses when bidding their truthful private value $v_i$, but wins with an overbid $b_i^o>v_i$.
We write the truthful and the over-bidding bid profile as $b^v = (v_i, b_{-i})$ and $b^o = (b_i^o,b_{-i})$, respectively.
Furthermore, let $x^v$ and $x^o$ denote the efficient allocations for the bidding profiles $b^v$ and $b^o$, respectively. Note that $i\notin x^v$, but $i\in x^o$.

The fact that bidder $i$ loses with bid $v_i$ implies that $W(b^v,x^o) < W(b^v,x^v)$.
Let $\epsilon = W(b^v, x^v) - W(b^v, x^o)$.
We choose $b_i^o = v_i +\epsilon$ as the smallest overbid, such that $x^o$ is an efficient allocation.
Then $W(b^o,x^o) = W(b^v,x^v)$.
Note that it suffices to consider this overbid since any further increase of the bid beyond this value decreases bidder $i$'s utility according to Lemma \ref{lem:overbid_winning_SM}.

We calculate the VCG payment of bidder $i$ for the bidding profile $b^o$. 
The maximum reported social welfare without bidder $i$ equals $W(b^o, x^v)=W(b^v, x^v)$ since bidder $i$ loses in $x^v$.
Furthermore, the total reported social welfare of $x^o$ excluding $i$ equals $W(b^o,x^o)-b_i^o$.
Therefore,
\begin{equation*}
    p_i^V(b^o,x^o) = W(b^v, x^v) - (W(b^o,x^o)-b_i^o) = b_i^o.
\end{equation*}
As mentioned in Section \ref{sec:secc_non}, a core constraint of the form $p_i(b^o,x^o)\geq p_i^V(b^o,x^o)$ exists for every bidder.
But this means that bidder $i$'s payment is at least $b_i^o>v_i$ resulting in a negative utility for bidder $i$.
\end{proof}

Together Lemmas \ref{lem:overbid_winning_SM} and \ref{lem:overbid_losing_SM} imply the desired result.

\begin{theorem}
\label{thm:non-decSCMA}
In an SMCA with a core-selecting, non-decreasing payment rule, over-bidding strategy is always weakly dominated by truthful bidding in any Nash equilibrium.
\end{theorem}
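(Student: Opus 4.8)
The plan is to show that the two preceding lemmas together establish weak domination pointwise over all opponent profiles, from which the equilibrium statement follows immediately. Recall that a strategy is weakly dominated by another if the latter yields at least as much utility against every opponent profile. So fixing a bidder $i$ with true value $v_i$ and an arbitrary over-bid $b_i^o > v_i$, it suffices to prove that $u_i(v_i, b_{-i}) \geq u_i(b_i^o, b_{-i})$ for every $b_{-i}$, where $u_i$ denotes bidder $i$'s utility, i.e.\ the value of the acquired bundle minus the payment. This inequality is a statement about all $b_{-i}$ and in particular holds at the opponents' equilibrium bids, so once it is proven, over-bidding is weakly dominated by truthful bidding in any Nash equilibrium.

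First I would fix an arbitrary opponent profile $b_{-i}$ and split into two cases according to whether the truthful bid $v_i$ wins or loses against $b_{-i}$. If $v_i$ is a winning bid, Lemma \ref{lem:overbid_winning_SM} applies directly and gives $u_i(b_i^o, b_{-i}) \leq u_i(v_i, b_{-i})$. If $v_i$ is a losing bid, then $u_i(v_i, b_{-i}) = 0$ by voluntary participation, and Lemma \ref{lem:overbid_losing_SM} states that over-bidding is not profitable, i.e.\ $u_i(b_i^o, b_{-i}) \leq 0$ as well. In either case the desired inequality holds, and since $b_{-i}$ was arbitrary, truthful bidding weakly dominates the over-bid.

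The one point that needs care is that the two lemmas must cover \emph{every} over-bid magnitude, not merely a single distinguished one. In the winning case this is automatic, since increasing a winning bid leaves the allocation efficient, so bidder $i$ keeps winning for every $b_i^o > v_i$ and the non-decreasing property bounds the payment from below for all of them. In the losing case, Lemma \ref{lem:overbid_losing_SM} already reduces an arbitrary winning over-bid to the minimal one that changes the allocation and then invokes the winning case for any further increase; thus all over-bids $b_i^o > v_i$ are accounted for. Combining the two cases over all $b_{-i}$ and all over-bid values yields weak domination.

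The main obstacle here is not in the theorem itself but in correctly wiring together the lemmas: the argument is a clean case distinction, and the only subtleties are (i) that we obtain only weak, rather than strict, domination, since the losing-bid branch merely gives $u_i(b_i^o,b_{-i}) \leq 0$ rather than a strict decrease, and (ii) that the per-profile inequality must be verified uniformly in both $b_{-i}$ and the size of the over-bid before it can be upgraded to the equilibrium statement. Since weak domination is established for every fixed $b_{-i}$, it holds a fortiori at the opponents' strategies in any Nash equilibrium, which completes the proof.
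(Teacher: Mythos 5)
Your proposal is correct and matches the paper's approach: the paper derives the theorem directly as the combination of Lemmas \ref{lem:overbid_winning_SM} and \ref{lem:overbid_losing_SM}, exactly the case distinction (truthful bid winning versus losing against a fixed $b_{-i}$) that you spell out. Your write-up is in fact more explicit than the paper's one-line justification, particularly in noting that the pointwise inequality must hold uniformly over all opponent profiles and over-bid magnitudes before it yields weak domination at equilibrium.
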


\section{Non-decreasing Property of Other Core-Selecting Payment Rules}

We showed in the previous section, that the non-decreasing property ensures that bidders can not profit from overbidding. However, until now, we have a limited understanding of which payment rules are non-decreasing. In this section, we investigate two common core-selecting payment rules, namely the proxy and the proportional payments, and examine their non-decreasing properties.

Bosshard et al. argued that the proxy and the proportional payments are non-decreasing~\cite{bosshard2018non}. However, a counterexample with two bidders and two items $A$ and $B$ exists: Assume $b_1(A)=12$, $b_1(AB)=18$ and $b_2(B)=9$. Under the proportional rule we have that $p_1 = 8$ and $p_2 = 6$. However, if we increase $b_1(A)$ to 15, we get $p_1 = 5$ and $p_2 = 3$. This counterexample occurs because the core constraint for bidder 2 has been lowered by the increase of $b_1(A)$.
In this section, we correct the proof of Proposition 2 in~\cite{bosshard2018non}.

\begin{lemma}
\label{claim:prop_prox_non_dec}
The proxy payment function and the proportional payment function are non-decreasing for SMCA.
\end{lemma}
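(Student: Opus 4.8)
The plan is to treat the two rules in parallel, reducing each to a monotonicity statement about a single scalar parameter. Fix an allocation $x$ and a bidder $i$; losing bidders always pay $0$, so (per Definition \ref{def:non-decreasing-weak}) we may assume $i$ is a winner and parametrize everything by $t := b_i$, letting $t$ increase over the range on which $x$ stays efficient. The backbone of both arguments is one structural observation about how the core constraints of Definition \ref{def:core} react to this change. Writing each constraint as $\sum_{j\in (N\setminus L)\cap x} p_j \ge R_L$ with $R_L := W(b,X(b_L)) - W(b,x_L)$ (losers contribute $0$ on the left), single-mindedness gives a clean dichotomy: if $i\notin L$ then $R_L$ is constant in $t$ while $p_i$ appears on the left; if $i\in L$ then $p_i$ is absent and $R_L$ is non-increasing in $t$, with slope $-1$ exactly when $i$ lies outside the welfare-maximizing allocation $X(b_L)$ within $L$ and slope $0$ otherwise. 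This holds because $W(b,x_L)$ grows by exactly the bid increase (as $i\in x_L$) while $W(b,X(b_L))$ grows by at most that amount.

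For the proportional rule write $p_j = \alpha b_j$, so the minimal feasible value is $\alpha^*(t) = \max_L R_L(t)/D_L(t)$ with $D_L := \sum_{j\in(N\setminus L)\cap x} b_j$, and the quantity of interest is $p_i = \alpha^*(t)\,t = \max_L h_L(t)$ where $h_L(t) := t\,R_L(t)/D_L(t)$. It then suffices to show this maximum is non-decreasing. When $i\notin L$ we have $h_L(t) = R_L\, t/(C_L + t)$ for a constant $C_L\ge 0$, which is non-decreasing. When $i\in L$ and $i\in X(b_L)$, $R_L$ is constant and again $h_L$ is non-decreasing. The remaining case, $i\in L$ with $i\notin X(b_L)$, is handled by comparison with $L_0 := L\setminus\{i\}$: there $R_{L_0} = R_L + t$ and $D_{L_0} = D_L + t$, and since feasibility forces $\alpha^*\le 1$ and hence $R_L\le D_L$ at the binding constraint, the ratio for $L_0$ dominates that for $L$, so $h_{L_0}\ge h_L$ at that point while $h_{L_0}$ is of the already-treated non-decreasing form. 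Chaining these observations shows that at any larger bid some coalition attains a value of $h$ at least as large as the previous maximum, giving monotonicity of $p_i$.

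For the proxy rule write $p_j = \min(\alpha, b_j)$ and let $\alpha^*(t)$ be the minimal feasible $\alpha$, so $p_i = \min(\alpha^*(t), t)$. Here I would split into the \emph{capped} regime $t\le \alpha^*(t)$, where $p_i = t$ is manifestly increasing, and the \emph{uncapped} regime $t > \alpha^*(t)$, where $p_i = \alpha^*(t)$. The key claim is that in the uncapped regime $\alpha^*$ is locally constant: increasing $t$ past $\alpha^*$ relaxes no binding constraint, because in every constraint with $i\notin L$ the term $\min(\alpha,b_i)$ already equals $\alpha$ near the optimum and is unaffected, while a constraint with $i\in L$ cannot be binding in this regime -- the same $L$ versus $L_0 = L\setminus\{i\}$ comparison shows $L_0$ would require a strictly larger $\alpha$ once $t > \alpha^*$. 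Continuity of $\alpha^*$ in $t$ then glues the two regimes: at the transition $p_i = t = \alpha^*$, so $p_i$ rises along the diagonal and then stays flat, hence is non-decreasing throughout.

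The main obstacle -- and the place where the earlier argument of \cite{bosshard2018non} breaks for multi-minded bidders -- is precisely the coalitions $L$ containing $i$, whose right-hand side $R_L$ can drop as $b_i$ grows and thereby threaten to lower $\alpha^*$ and with it $p_i$. The crux of both proofs is the observation that such a coalition is always dominated by $L\setminus\{i\}$ whenever it could matter (using $\alpha^*\le 1$, equivalently the voluntary-participation bound $p_i\le b_i$), so a constraint whose bound is decreasing can never be the one that determines bidder $i$'s payment. Single-mindedness is what guarantees the clean slope-$0$-or-$1$ behavior of $R_L$ that makes this comparison exact; I expect the bookkeeping around the proxy-rule transition, together with the continuity of $\alpha^*$, to be the most delicate remaining detail.
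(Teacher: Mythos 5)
Your proof is correct, and it rests on the same key observation as the paper's: a core constraint generated by a coalition $L$ containing the deviating winner $i$ (with $i\notin X(b_L)$) is dominated by the constraint for $L\setminus\{i\}$, precisely because of voluntary participation ($p_i\le b_i$, equivalently $R_L\le D_L$); this is the paper's ``$CC_{-i}$ covers $CC$'' step. Where you diverge is in how you finish. The paper identifies one active constraint that is invariant under the bid increase and then argues geometrically: the point on the new proportional ray with the same $i$-th coordinate has strictly smaller other coordinates, hence falls weakly below that constraint, forcing $p_i'\ge p_i$; the proxy case is dismissed as ``analogous.'' You instead reduce each rule to monotonicity of an explicit scalar --- $p_i=\max_L t\,R_L(t)/D_L(t)$ for proportional, and the capped/uncapped analysis of $\alpha^*(t)$ for proxy --- and show the maximum of the surviving (non-dominated) functions is non-decreasing via the mediant inequality. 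This buys a genuinely worked-out proxy argument rather than an appeal to analogy, and makes the role of $\alpha^*\le 1$ explicit. One small imprecision: in the proxy case you assert that a constraint with $i\in L$ ``cannot be binding'' in the uncapped regime and justify it only by the $L$-versus-$L\setminus\{i\}$ comparison; that comparison only covers $i\notin X(b_L)$. A constraint with $i\in L$ and $i\in X(b_L)$ \emph{can} be binding, but --- as your own slope dichotomy already records --- its right-hand side and left-hand side are both unaffected by $t$, so it cannot pull $\alpha^*$ down. Stating that sub-case explicitly closes the only loose end.
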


\begin{proof}

We prove the statement for the proportional payment rule, the argument works analogously for proxy payment rule. 

First, we show that for SMCAs with continuous payment function, there exists a core constraint which is active, and that is not affected by increasing bids for any winning bidders.

Consider a winning bidder $i$ increasing their bid by an arbitrarily small amount $\epsilon$ from $b_i$ to $b_i'=b_i+\epsilon$. Because the bids are continuous, there exists a core constraint which is active for both $p$ and $p'$.
Let us denote this core constraint with $CC$, which is generated by the set of bidders $L$. If $i\notin L$, then increasing $b_i$ to $b_i'$ does not change $CC$, which is sufficient for the non-decreasing proof in \cite{bosshard2018non}. Otherwise, we have $i\in L$.

Consider the efficient allocation $X(b_L)$ in the auction with only the bids of bidders in $L$. Since $b_i<b_i'$, there are three scenarios:

\textbf{1st case:} $X(b_L)_i=0$ and $X(b_L')_i=0$. Then $CC$ decreases by $b_i'-b_i$, which is denoted as $CC'$. However, there exists another core constraint $CC_{-i}$ which is generated by the set of $L\setminus i$. Increasing $b_i$ to $b_i'$ does not change $CC_{-i}$. Moreover, constraint
$CC_{-i}$ which is $$\sum_{j \in N \setminus L} p_j(b,x) + p_i \geq W(b, X(b_L)) - (W(b, x_L) - b_i)$$ covers
$CC$: $$\sum_{j \in N \setminus L} p_j(b,x) \geq W(b, X(b_L)) - W(b, x_L).$$
Therefore, $p$ and $p'$ are all on $CC_{-i}$ that is not affected by increasing bids from $b_i$ to $b_i'$.

\textbf{2nd case:} $X(b_L)_i=1$ and $X(b_L')_i=1$. Then $CC$ is not affected by increasing bids from $b_i$ to $b_i'$.

\textbf{3rd case:} $X(b_L)_i=0 = 0$ and $X(b_L')_i = 1$. Then there is a switching point $b_i^*$, such that $i$ is in one of the efficient allocations with bid $b_L^*$, but $i$ is not in the efficient allocations with bids $b_L^-$ for any $b_i^-<b_i^*$.
Then we can first use the same argument as in the first scenario to show that $p$ and $p^*$ are all on the $CC_{-i}$ that is not affected by increasing bids from $b_i$ to $b_i^*$. Then, we use the same argument as in the second scenario to show that $CC_{-i}$ is not affected by increasing bids from $b_i^*$ to $b_i'$.

Therefore, for SMCAs, there exists a core constraint that is active, and is not affected by increasing bids for any winning bidders. Let $\hat{p}$ be the unique point on the line of payments proportional according to $b'$ with $\hat{p}_i=p_i$. Because $b_i<b_i'$, we have $\hat{p}_j<p_j, \forall j\neq i$. Therefore, $\hat{p}$ lies weakly below the core constraint $CC$, thus $p_i<p_i'$ and the proportional payment function are non-decreasing for SMCA.

\end{proof}

\section{Conclusion}
In this paper, we study the relationship between payment rules and core constraints in CAs. We show how core constraints interact with an incentive property of payment rules in SMCAs, more precisely, that a single effective core constraint results in the non-decreasing property of the VN payment rule.
Additionally, we introduce a conflict graph representation of SMCAs and prove sufficient conditions on it for the existence of a single effective core constraint.
Furthermore, we examine the conjecture that over-bidding is never favorable in any Nash equilibrium of CAs with non-decreasing payment rules.

\printbibliography

\newpage
\appendix

\section{Example from Section \ref{sec:secc_non}}\label{app:non_decreasing_not_necessary}

As proved in Theorem \ref{thm:single_core_constraint}, the existence of an SECC is a sufficient condition for the VN payment to be non-decreasing.
It is however, not a necessary condition as the following example shows (see Table \ref{tbl:bull_graph}).

\begin{table}[ht]
\centering
\begin{tabular}{|c|c|c|c|c|c|}
\hline
Bidder       & 1       & 2         & 3         & 4         & 5       \\ \hline
Bundle $k_i$ & $\{A\}$ & $\{B\}$ & $\{C\}$ & $\{A,B\}$ & $\{A,C\}$ \\
\hline
\end{tabular}
\caption{An example of an SMCA with more than a single effective core constraint that is non-decreasing.}
\label{tbl:bull_graph}
\end{table}

If the bidders $b_1$, $b_2$ and $b_3$ win the auction, we have the following two core constraints:
\begin{align*}
    p_1 + p_2 &\geq b_4 \\
    p_1 + p_3 &\geq b_5
\end{align*}
In general, none of the two fully covers the other so there is more than one effective core constraint.
However, this example is actually non-decreasing.
To prove this, we briefly discuss two necessary conditions for when overbidding can occur.
Firstly, increasing bid $b_i$ must decrease the VCG payment $p_j^V$ of another winner. This is clear since increasing $b_i$ will not change $p_i^V$ and cannot increase another winner's VCG payment. Furthermore, if the VCG payment does not change, the VCG nearest clearly will not either.

Secondly, the decrease of $p_j^V$ must move the VCG nearest point, and not only decrease $p_j^{VN}$, but also $p_i^{VN}$.
The VCG nearest point will move along a number of faces of the core.
Each of the faces if defined by a subset of core constraints being tight.
During the movement along at least one of these faces, $p_i^{VN}$ must decrease.

In this example, there are only the two core constraints $p_1 + p_2 \geq b_4$ and $p_1 + p_3 \geq b_5$. So the only face to consider is the line defined by both constraints being tight. This can be parametrized as
\begin{equation*}
\centering
    \begin{pmatrix}0 \\ b_4 \\ b_5\end{pmatrix} + x \begin{pmatrix}1 \\ -1 \\ -1\end{pmatrix}
\end{equation*}
with $x\in\mathbb{R}$.
Note that only the second and third entry of the directional vector have the same sign.
Hence, 2 and 3 are the only potential indices $j$ such that a decrease of $p_j^V$ could cause $p_j^{VN}$ and $p_i^{VN}$ decrease.
However, the VCG payments are
\begin{align*}
    p_1^V &= \max(b_4-b_2, b_5-b_3)\\
    p_2^V &= \max(b_4-b_1, 0)\\
    p_3^V &= \max(b_5-b_1, 0).
\end{align*}
In particular, increasing $b_2$ or $b_3$ will not decrease $p_3^V$ and $p_2^V$, respectively. So no bidder can decrease their payment by increasing their bid.

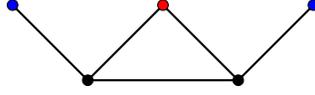
\begin{figure}
\centering
\begin{tikzpicture}
\draw[thick] (-1, 1) -- (0,0) -- (1, 1) -- (2,0) -- (3, 1);
\draw[thick] (0,0) -- (2,0);
\draw[fill=black] (0,0) circle (2pt);
\draw[fill=black] (2,0) circle (2pt);
\draw[fill=blue] (-1, 1) circle (2pt);
\draw[fill=red] (1, 1) circle (2pt);
\draw[fill=blue] (3, 1) circle (2pt);
\end{tikzpicture}
\caption{
The conflict graph of the auction in \autoref{tbl:bull_graph}. The bidders $4,5$ are colored in black. The bidders $2,3$ are colored in blue. The bidder $1$ is colored in red.
} 
\label{fig:bull_graph}
\end{figure}

\section{Example from Section \ref{sec:graphs}}\label{app:secc_not_necessary}

The example in Table \ref{tbl:line5} (with the conflict graph in Figure \ref{fig:line5}) illustrates that while the conditions in Lemmas \ref{lemma:multipartite} and \ref{lemma:mis2} are sufficient, they are not necessary.

\begin{table}[ht]
\centering
\begin{tabular}{|c|c|c|c|c|c|}
\hline
Bidder       & 1       & 2         & 3         & 4         & 5       \\ \hline
Bundle $k_i$ & $\{A,B\}$ & $\{B,C\}$ & $\{C,D\}$ & $\{D,A,E\}$ & $\{E\}$ \\ \hline
\end{tabular}
\caption{An example of an SMCA that has a SECC, where the conflict graph has a MIS of size larger than 2 and is not a complete multipartite graph.}
\label{tbl:line5}
\end{table}

\begin{figure}
\centering
\begin{tikzpicture}
\draw[thick] (-1.5,0) -- (-0.5,1) -- (0.5,0) -- (-0.5,-1) -- (-1.5,0);
\draw[thick] (0.5,0) -- (1.5,0);
\draw[fill=black] (-1.5,0) circle (2pt);
\draw[fill=red] (-0.5,1) circle (2pt);
\draw[fill=black] (0.5,0) circle (2pt);
\draw[fill=red] (-0.5,-1) circle (2pt);
\draw[fill=red] (1.5,0) circle (2pt);
\end{tikzpicture}
\caption{
The conflict graph of the auction in \autoref{tbl:line5}. The MIS with three nodes (bidders $1, 3, 5$) is colored in red.
} 
\label{fig:line5}
\end{figure}
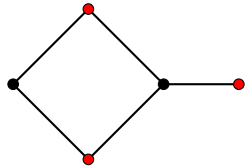

\begin{ulemma}\label{lem:not_necessary}
There exists an SMCA that is not a complete multipartite graph which has an SECC and has an MIS of size larger than 2.
\end{ulemma}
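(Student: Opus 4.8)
The plan is to prove the existence statement constructively, by exhibiting the five-bidder SMCA of Table \ref{tbl:line5} and checking the three required properties in turn. First I would record its conflict graph (Figure \ref{fig:line5}): computing the pairwise bundle intersections shows the edge set is $\{1\text{-}2,\,2\text{-}3,\,3\text{-}4,\,4\text{-}1,\,4\text{-}5\}$, i.e.\ a $4$-cycle on bidders $1,2,3,4$ together with a pendant bidder $5$ attached to bidder $4$. The set $\{1,3,5\}$ is independent, since none of the pairs $1\text{-}3$, $1\text{-}5$, $3\text{-}5$ is an edge, and it is maximal because bidder $2$ is adjacent to both $1$ and $3$ while bidder $4$ is adjacent to all of $1,3,5$. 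This already furnishes a maximal independent set of size three.

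For the non-multipartite claim I would use the standard characterisation that a graph is complete multipartite if and only if non-adjacency is transitive, equivalently it contains no induced copy of $K_2\cup K_1$ (an edge together with an isolated vertex). The triple $\{1,2,5\}$ is such an obstruction: $1\text{-}2$ is an edge while $1\text{-}5$ and $2\text{-}5$ are non-edges. Hence the conflict graph is not complete multipartite, so the example lies outside the scope of Lemma \ref{lemma:multipartite}, and since it has a size-three MIS it also lies outside the scope of Lemma \ref{lemma:mis2}.

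The substance of the proof is the SECC claim, which must be verified for every possible winning allocation, i.e.\ for each maximal independent set $\{2,4\}$, $\{2,5\}$ and $\{1,3,5\}$. The two two-winner allocations follow immediately from the reduction in the proof of Lemma \ref{lemma:mis2}, so the real work is the three-winner allocation $\{1,3,5\}$. Here I would first prune the core constraints exactly as in the proof of Lemma \ref{lemma:multipartite}: every constraint whose complement $N\setminus L$ contains a losing bidder ($2$ or $4$) is dominated, and the single-winner constraints are VCG-constraints that may be discarded. This leaves the grand-coalition constraint $p_1+p_3+p_5\ge W(b,X(b_{\{2,4\}}))$ together with the three pair constraints arising from $N\setminus L\in\{\{1,3\},\{1,5\},\{3,5\}\}$. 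The goal is then to show that a single one of these, together with the voluntary-participation bounds $p_i\le b_i$, cuts out the same polytope as all of them.

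The step I expect to be the main obstacle is exactly this domination argument. Unlike the complete-multipartite case, the right-hand sides of the pair constraints are differences of max-weight sub-allocations and so depend on the bids through a maximum; I would therefore organise the verification as a case analysis according to which sub-allocation is optimal in each $X(b_L)$. The delicate regime is the one in which the pendant bidder $5$ carries a large bid, so that the sub-allocation $\{2,5\}$ beats $\{2,4\}$ inside $X(b_{\{2,4,5\}})$: there the pair constraint on $p_1+p_3$ tightens, while the slack that the grand constraint can transfer to $p_1+p_3$ through $p_5\le b_5$ simultaneously shrinks. Pinning down whether the intended single constraint still covers this pair constraint for every admissible bid profile (those making $\{1,3,5\}$ efficient) is the crux of the statement, and it is precisely the comparison of these two competing right-hand sides that the proof must settle on the nose.
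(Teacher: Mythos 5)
Your setup matches the paper's: you identify the conflict graph (the $4$-cycle on bidders $1,2,3,4$ with bidder $5$ pendant on $4$), the size-three MIS $\{1,3,5\}$, and you add a clean witness for non-multipartiteness (the induced $K_2\cup K_1$ on $\{1,2,5\}$) that the paper leaves implicit. You also prune the core constraints for the allocation $\{1,3,5\}$ to the grand constraint $p_1+p_3+p_5\ge b_2+b_4$ plus three pair constraints, dispose of the two-winner allocations via Lemma~\ref{lemma:mis2}, and correctly isolate the one constraint that resists the easy slack argument, namely $p_1+p_3\ge\max\bigl(b_2+b_4,\,b_2+b_5\bigr)-b_5$ in the regime $b_5>b_4$. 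But there the proposal stops: you announce that this comparison ``must be settled on the nose'' without settling it, and that comparison is the entire content of the lemma. The gap is genuine, and your suspicion that it is delicate is well founded: the domination cannot be obtained from the grand constraint and $p_5\le b_5$ alone, since those only yield $p_1+p_3\ge b_2+b_4-b_5$, which is strictly weaker than the required $p_1+p_3\ge b_2$ whenever $b_5>b_4$. Concretely, for $b_1=b_3=b_5=10$, $b_2=5$, $b_4=1$ the point $(p_1,p_3,p_5)=(0,0,10)$ satisfies the grand constraint and voluntary participation but violates the pair constraint, so a pure polytope-containment argument is not available in this regime.

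The paper closes this case by a different mechanism, which is the idea missing from your proposal. It first proves that the grand constraint CC1 is always \emph{tight} at the VCG-nearest payment point, by ruling out three scenarios using the explicit VCG payments together with the efficiency inequalities $b_2+b_4\le b_1+b_3+b_5$ and $b_2+b_5\le b_1+b_3+b_5$ (which hold because $\{1,3,5\}$ wins). Only then, using $p_1+p_3+p_5=b_2+b_4$ rather than the inequality, does it argue that the pair constraint on $p_1+p_3$ is covered. In other words, the resolution works at the level of the minimum-revenue face on which the VN point lies, not at the level of the full core polytope, and it requires the auxiliary tightness claim as an extra step. Without that step (or some equivalent argument) your proof of the SECC property for the allocation $\{1,3,5\}$ is incomplete, so the existence statement is not yet established.
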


\begin{proof}
Consider the SMCA with the interest profile shown in Table \ref{tbl:line5} (and with the graph representation shown in Figure \ref{fig:line5}).
If two or fewer bidders win, we know from Lemma \ref{lemma:mis2} that an SECC exists. The only possibility for at least 3 bidders winning is if bidders 1, 3 and 5 win.
Then the core constraints on the payments of the winners are the following:

\begin{align}
    p_1+p_3+p_5 &\geq b_2+b_4 \tag{CC1}\\
    p_1+p_5 &\geq \max((b_2+b_4),b_3)-b_3 \tag{CC2}\\
    p_3+p_5 &\geq \max((b_2+b_4),b_1)-b_1 \tag{CC3}\\
    p_1+p_3 &\geq \max((b_2+b_4),(b_2+b_5))-b_5 \tag{CC4}
\end{align}

We first show that CC2 and CC3 are never effective core constraint. If CC2 is non-trivial, then we have $p_1+p_5\geq b_2+b_4-b_3$. Since $p_3\leq b_3$, CC2 is satisfied as soon as the payments of winners satisfies CC1. Therefore, CC2 is never an effective core constraint. As CC3 is symmetric to CC2, so we can use the same argument to show CC3 is also never an effective core constraint.

We now show that the CC1 is always a tight core constraint of any VCG-nearest payment, i.e., $p_1+p_3+p_5= b_2+b_4$. We prove this statement by showing that the following three scenarios are impossible.

The first scenario is that neither CC1 nor CC4 is tight for the VCG-nearest point. In such a scenario, the payment must be equal to the VCG-payment, as those are the remaining constraints on every individual winning bidders payments. Since CC1 is not tight, we have $p_1^V + p_3^V + p_5^V > b_{2} + b_{4}$ which implies

\begin{align*}
    &\max\big((b_{2} + b_{4}), (b_1 + b_3), (b_{2} + b_5)\big)\\
      + &\max\big((b_1 + b_5), (b_{2} + b_{4}), (b_{2} + b_5)\big)\\
      + &\max\big((b_{2} + b_{4}), (b_1 + b_3)\big) > b_{2} + b_{4} + 2(b_{1} + b_{3} + b_{5}).
\end{align*}

If $(b_{2} + b_{4})$ is maximal in any of the three items, we see that the above inequality cannot hold, due to the fact that bidders $1, 3, 5$ winning which means $\max(b_{2} + b_{4}, b_{2} + b_5) \leq b_{1} + b_3 + b_{5}$. Therefore, we obtain the following expression,
    
\begin{equation*}
\begin{split}
    \max\big((b_3 + b_5), (b_{2} + b_5)\big) + \max\big((b_1 + b_5), (b_{2} + b_5)\big)\\
    + (b_1 + b_3)
    > b_{2} + b_{4} + 2(b_{1} + b_{3} + b_{5})
\end{split}
\end{equation*}
    
However, all possible outcomes of the left side are strictly inferior to the right side. Therefore, $p_1^V + p_3^V + p_5^V \leq b_{2} + b_{4}$, which contradicts to our assumption for the first scenario that CC1 is not tight.

The second scenario is that CC4 is tight and CC1 is not tight. Then we have either $p_1+p_3=b_2+b_4-b_5$, or $p_1+p_3=b_2$.

If $p_1+p_3=b_2+b_4-b_5$, then $p_1+p_3+p_5\leq b_2+b_4$ because $p_5\leq b_5$. This implies that CC1 is tight, which contradicts to our assumption of the second scenario.

If $p_1+p_3=b_2$, we have $p_5> b_4$ because CC1 is not tight. Then we know that $p_5$ equals the VCG-payment $p_5^V$, as it is the remaining constraint on bidder $5$'s winning payment. However, $p_5^V> b_4$ implies that,

\begin{align*}
    \max((b_2+b_4), (b_1+b_3))-(b_1+b_3)&>b_4\\
    b_2+b_5&>b_1+b_3+b_5
\end{align*}

This contradicts the assumption that bidders $1, 3, 5$ are winning. Therefore, CC1 is a tight constraint for all VCG-nearest payment in the SMCA.

The next step of the proof is to show that CC4 is always covered by CC1. If $b_2+b_4>b_2+b_5$, because $b_5\geq p_5$, it is trivial that CC4 always satisfies when CC1 satisfies. Otherwise, if $b_2+b_5>b_2+b_4$, we subtract CC4 $p_1+p_3\geq b_2$ from CC1 $p_1+p_3+p_5= b_2+b_4$ (as CC1 is always tight). Then we have $p_5 \leq b_4$, which indicates that if CC1 holds, then CC4 also holds. Therefore, CC1 is the only effective core constraint in the SMCA.
\end{proof}

\end{document}